\newcommand{\command}[1]{\texttt{#1}}
\newcommand{\comAssert}[1]{\command{assert}~#1}
\newcommand{\comITE}[3]{\command{if}~#1~\command{then}~#2~\command{else}~#3}
\newcommand{\comIT}[2]{\command{if}~#1~\command{then}~#2}
\newcommand{\comWhile}[2]{\command{while}~#1~\command{do}~#2}
\newcommand{\comLabel}[1]{\command{label}~#1}
\newcommand{\comBrk}{\command{break}}
\newcommand{\comRet}{\command{return}}
\newcommand{\comGoto}[1]{\command{goto}~#1}
\newcommand{\true}{\command{true}}
\newcommand{\false}{\command{false}}
\newcommand{\conti}[1]{\mathbf{#1}}
\newcommand{\acc}[1]{{\conti{acc}~#1}}  
\newcommand{\ret}{{\conti{ret}}}  
\newcommand{\brk}[1]{\conti{brk}~#1}  
\newcommand{\jmp}[1]{{\conti{jmp}~#1}} 
\newcommand{\At}{\mathbf{\mathrm{At}}}
\newcommand{\BExp}{\mathbf{\mathrm{BExp}}}
\newcommand{\Exp}{\mathbf{\mathrm{Exp}}}
\newcommand{\contWith}{\mathbf{cont}}
\newcommand{\exitWith}{\mathbf{exit}}
\newcommand{\iter}{\mathrm{iter}}
\crefname{prog}{program}{programs}
\Crefname{prog}{Program}{Programs}
\theoremstyle{acmdefinition}
\newtheorem*{remark}{Remark}
\begin{document}

\title{\texorpdfstring{CF-GKAT:\@ Efficient Validation of Control-Flow Transformations}{CF-GKAT: Efficient Validation of Control-Flow Transformations}} 

\author{Cheng Zhang}
\orcid{0000-0002-8197-6181}
\affiliation{%
  \institution{University College London}
  \city{London}
  \country{United Kingdom}
}
\email{czhang03@bu.edu}
\authornote{Work performed at Boston University.}

\author{Tobias Kappé}
\orcid{0000-0002-6068-880X}
\affiliation{%
  \institution{LIACS, Leiden University}
  \city{Leiden}
  \country{Netherlands}
}
\email{t.w.j.kappe@liacs.leidenuniv.nl}
\authornote{Work performed at Open University of the Netherlands and ILLC, University of Amsterdam.}

\author{David E. Narváez}
\orcid{0000-0003-3704-1060}
\affiliation{%
  \institution{Virginia Tech}
  \city{Blacksburg}
  \country{USA}
}
\email{david.narvaez@vt.edu}

\author{Nico Naus}
\orcid{0000-0003-3442-1543}
\affiliation{%
  \institution{Open University of the Netherlands}
  \city{Heerlen}
  \country{Netherlands}
}
\affiliation{%
  \institution{Virginia Tech}
  \city{Blacksburg}
  \country{USA}
}
\email{nico.naus@ou.nl}


\begin{abstract}
%
%

Guarded Kleene Algebra with Tests (GKAT) provides a sound and complete framework to reason about trace equivalence between simple imperative programs.
However, there are still several notable limitations.
First, GKAT is completely agnostic with respect to the meaning of primitives, to keep equivalence decidable.
Second, GKAT excludes non-local control flow such as \command{goto}, \command{break}, and \command{return}.
To overcome these limitations, we introduce \emph{Control-Flow GKAT} (\emph{CF-GKAT}), a system that allows reasoning about programs that include non-local control flow as well as hardcoded values.
CF-GKAT is able to soundly and completely verify trace equivalence of a larger class of programs, while preserving the nearly-linear efficiency of GKAT\@.
This makes CF-GKAT suitable for the verification of control-flow manipulating procedures, such as decompilation and \command{goto}-elimination.
To demonstrate CF-GKAT's abilities, we validated the output of several highly non-trivial program transformations, such as Erosa and Hendren's \command{goto}-elimination procedure and the output of Ghidra decompiler.
CF-GKAT opens up the application of Kleene Algebra to a wider set of challenges, and provides an important verification tool that can be applied to the field of decompilation and control-flow transformation.


\end{abstract}

\begin{CCSXML}
  <ccs2012>
     <concept>
         <concept_id>10003752.10003790.10003794</concept_id>
         <concept_desc>Theory of computation~Automated reasoning</concept_desc>
         <concept_significance>500</concept_significance>
         </concept>
     <concept>
         <concept_id>10003752.10003790.10002990</concept_id>
         <concept_desc>Theory of computation~Logic and verification</concept_desc>
         <concept_significance>500</concept_significance>
         </concept>
     <concept>
         <concept_id>10003752.10010124.10010131.10010132</concept_id>
         <concept_desc>Theory of computation~Algebraic semantics</concept_desc>
         <concept_significance>500</concept_significance>
         </concept>
     <concept>
         <concept_id>10011007.10010940.10010992.10010993</concept_id>
         <concept_desc>Software and its engineering~Correctness</concept_desc>
         <concept_significance>500</concept_significance>
         </concept>
   </ccs2012>
\end{CCSXML}

\ccsdesc[500]{Theory of computation~Automated reasoning}
\ccsdesc[500]{Theory of computation~Logic and verification}
\ccsdesc[500]{Theory of computation~Algebraic semantics}
\ccsdesc[500]{Software and its engineering~Correctness}

\keywords{Program equivalence, Kleene algebra, control flow recovery}


\maketitle

%
%

\section{Introduction}

There are many notions of program equivalence with different granularity and computational complexity.
At one end of the spectrum, syntactic equality compares two programs solely based on its syntax. Although decidable, this technique will not equate programs like the following:
\begin{mathpar}
 \comITE{t}{p}{q}
 \and
 \comITE{ \neg  t}{q}{p}
\end{mathpar}
Conversely, input-output equivalence relates two programs if and only if they yield the same output on the same input.
This equivalence is very powerful, but also well-known to be undecidable.

Situated between these two extremes, \emph{Guarded Kleene Algebra with Tests}~\cite{kozen_BohmJacopiniTheorem_2008a,smolka_GuardedKleeneAlgebra_2020},
or \emph{GKAT} for short, reasons about \emph{trace equivalence} between simple \command{while}-programs.
By abstracting the meaning of the primitive tests and actions, it can focus on how tests determine which actions are performed.
For example, the two programs above are equivalent in GKAT:\@ they both execute $p$ when $t$ holds, and $q$ when it does not.
GKAT is also able to validate nontrivial equivalences like
\[
  \comWhile{t}{p};\ \comWhile{s}{\{\ q;\ \comWhile{t}{p}\ \}}  \equiv
 \comWhile{t  \lor  s}{\{\ \comITE{t}{p}{q}\ \}}
\]
Surprisingly, GKAT equivalence is decidable in nearly-linear time (assuming the set of test variables is fixed)~\cite{smolka_GuardedKleeneAlgebra_2020}, making it a reasonable compromise between complexity and granularity.

Nevertheless, the abstraction of GKAT comes at the price of not being able to validate some straightforward equivalences.
First, as mentioned, GKAT disregards the meaning of primitive programs and tests.
For instance, when given a program like
\begin{equation}
 \comITE{y \neq 0}{\{\ x := 42;\ p\ \}}{\{\ x := 42;\ q\ \}},%
 \label[prog]{prog: assignment inside branches}
\end{equation}
we can note that a change in the value of $x$ does not have effect on whether $y \neq 0$.
Hence, it should be possible to factor the assignment to $x$ out of the branches, and obtain
\begin{equation}
 x := 42;\ \comITE{y \neq 0}{p}{q}.%
 \label[prog]{prog: assignment outside branches}
\end{equation}
GKAT does not admit this equivalence, precisely because it is agnostic with respect to the meaning of primitive actions.
However, moving to a setting that accounts for the semantics of actions is hard, because Turing completeness --- and by extension, undecidability --- lurks nearby~\cite{kozen_KleeneAlgebraTests_1996, kuznetsov_ComplexityReasoningKleene_2023,azevedodeamorim_KleeneAlgebraCommutativity_2024}

Second, GKAT excludes non-local control-flow constructs like \(\command{goto}\), \(\comBrk\), and \(\comRet\).
In a general imperative language, this does not limit expressivity, as these constructs can be recovered using variables~\cite{erosa-hendren-1994} --- indeed, the Böhm-Jacopini theorem says that every type of control flow can be written using a single $\texttt{while}$-loop~\cite{DBLP:journals/cacm/BohmJ66}.
However, lacking both variables and non-local control structures, GKAT is not able to express all control flow in real-world programs.

As a concrete example, consider the programs below.
The control flow in \Cref{prog: goto version two state automaton} is based purely on labels and $\texttt{goto}$.
Meanwhile, \Cref{prog: break version two state automaton} is structured as a loop with the option to terminate early using $\comBrk$.
These programs happen to be trace equivalent (i.e., they always execute the same actions in the same order) but represent behavior not expressible in plain GKAT~\cite{schmid_GuardedKleeneAlgebra_2021}.
\begin{align}
  & \begin{aligned}
      & \comLabel{ \ell _{0}};\ \comIT{\neg t}{\comGoto{ \ell _{1}}};\ p;\ \comIT{t}{\comGoto{ \ell _{1}}};\ q;\ \comGoto{ \ell _{0}};\ \comLabel{ \ell _{1}}
    \end{aligned}\label[prog]{prog: goto version two state automaton}
 \\[3pt]
  & \comWhile{t}{\{\ p;\ \comITE{ \neg  t}{q}{\comBrk}\ \}}
 \label[prog]{prog: break version two state automaton}
\end{align}

As it turns out, deciding equivalence between programs such as these is essential in validating control-flow manipulation procedures.
Specifically, consider the control flow structuring phase of a decompiler~\cite{cifuentes-1994}, which is tasked with converting conditional and unconditional jumps into more conventional control flow constructs.
\Cref{prog: goto version two state automaton} can be thought of as pseudo-assembly that models the input of this process, and \Cref{prog: break version two state automaton} is a plausible outcome of decompilation.
Thus, the control-flow structuring process is correct when \Cref{prog: goto version two state automaton,prog: break version two state automaton} are equivalent.

To overcome the limitations of GKAT, we propose control-flow GKAT (CF-GKAT), an extension that is capable of equating some interesting programs.
Concretely, we extend GKAT in two ways.

First, we add \emph{indicator variables}, which can be assigned and tested against hardcoded values, and do not appear in other primitive actions and tests.
For example, assignments like $x := 42$ are allowed, but assignments like $x := y + 1$ are not.
This strikes a delicate balance, by being weak enough to exclude general computation (thus keeping equivalence decidable), yet strong enough to model the equivalence of \Cref{prog: assignment inside branches,prog: assignment outside branches}.
The addition of indicator variables empowers CF-GKAT to validate control-flow transformation algorithms that use them~\cite{yakdan_NoMoreGotos_2015,erosa-hendren-1994}.
We focus on a single indicator variable for brevity; an extension to multiple indicator variables should be straightforward.

Second, we extend GKAT with the non-local control-flow constructs \(\comGoto{\!}\), \(\comBrk\) and \(\comRet\).
This poses a challenge, as the non-local nature of these commands prevents a compositional semantics --- after all, the precise meaning of a statement like \(\comBrk\) depends on its context.
To overcome this, we propose an intermediate \emph{continuation semantics}.
In this semantics, each trace is tagged with a ``continuation'' that can accept (terminate normally), break, return, or go to a label.
Then, the trace semantics of the program can be obtained by resolving these continuations.

\smallskip
We were able to design an automaton model for CF-GKAT, where every CF-GKAT expression can be converted into a CF-GKAT automaton while preserving the continuation semantics.
Furthermore, CF-GKAT automata and continuation semantics can be lowered into GKAT automata and trace semantics respectively, while preserving their semantic correspondence.
We are thus able to reduce the problem of deciding trace equivalence of programs to deciding bisimilarity of two GKAT automata, which can be done efficiently.
Consequently, CF-GKAT can soundly and completely validate trace equivalence of a larger class of programs, while preserving the nearly-linear efficiency of GKAT\@.
For instance, it can automatically validate that \Cref{prog: break version two state automaton,prog: goto version two state automaton} are equivalent to each other, and also to their single-loop equivalent obtained via the Böhm-Jacopini theorem~\cite{DBLP:journals/cacm/BohmJ66}:
\begin{align}
 \begin{aligned}
   & x := 1;\ \comWhile{x \neq 0}{\{ \\
   & \qquad \comITE{x = 1  \land  t}{\{\ p;\ x := 2\ \} \\
   & \qquad\qquad}{\comITE{x = 2  \land   \neg  t} {\{\ q;\ x := 1\ \} \\
   & \qquad\qquad}{x := 0} }\ \}} \\
 \end{aligned} \label[prog]{prog: indicator version two state automaton}
\end{align}

To put this theory to work, we implemented an equivalence checker for CF-GKAT\@, with a front-end that can convert C code to CF-GKAT expressions by leveraging Clang's parser.
The resulting tool is able to automatically validate highly non-trivial program transformations like Erosa and Hendren's classic \command{goto}-elimination procedure~\cite{erosa-hendren-1994} and the control flow structuring of Ghidra.\footnote{\label{footnote:Ghidra}\url{https://ghidra-sre.org/}}

\paragraph{Outline}
The remainder of this article is organized as follows.
In \Cref{sec:overview}, we give an overview of CF-GKAT, including its syntax, continuation semantics, and trace semantics.
In \Cref{section:decision procedure}, we propose an automaton model for the continuation semantics, called CF-GKAT automata; we show how to translate a CF-GKAT expression to a CF-GKAT automaton, which in turn can be lowered to a GKAT automaton; ultimately, this gives rise to an algorithm for checking trace equivalence of CF-GKAT expressions.
In \Cref{sec:experiments}, we report on an implementation of our algorithm, along with several experiments.
We discuss related work in \Cref{sec:related-work}, and conclude in \Cref{sec:conclusion}.

Our proofs are formalized~\cite{artifact} in Coq~\cite{coq-web}; we provide proof sketches for the sake of intuition.

\section{Overview}%
\label{sec:overview}

In this section, we introduce the language of CF-GKAT, and gradually develop its semantics.
We begin by explaining the syntax of CF-GKAT;\@ after that, we delve into the semantics of its tests.
We then introduce the intermediate semantic model of \emph{(labeled and indexed families of) guarded languages with continuations}, which is flattened into to a model based on \emph{guarded languages}.
Having defined these tools, we then conclude by giving a semantics to CF-GKAT programs in this model.
Along the way, we will single out and explain some of the finer points using examples.


\subsection{Syntax}

The syntax of CF-GKAT consists of two levels, similar to GKAT\@.
At the bottom level, there are \emph{tests}.
These are Boolean assertions that can occur as guards inside conditional statements, or within assertions that occur in the program text.
To model them, we fix a finite set of primitive tests $T$, which represent uninterpreted expressions that may or may not hold.
The full syntax is as follows.
\[
 \BExp  \ni  b, c ::=
 \false
 \mid \true
 \mid t  \in  T
 \mid {\color{blue}x = i\ (i  \in  I)}
 \mid b  \lor  c
 \mid b  \land  c
 \mid  \neg  b
\]
Compared to GKAT, tests in CF-GKAT include the \emph{indicator variable test} $x = i$ (highlighted in \textcolor{blue}{blue}) for each \emph{indicator value} $i$ drawn from a finite but fixed set of possible indicator values $I$.
As the notation suggests, this test holds when the indicator variable $x$ currently has the value $i$.

The top level syntax of GKAT is built using a finite set of uninterpreted commands \( \Sigma \) (the \emph{primitive actions}), as well as \emph{assertions} of the form $\comAssert{b}$, where $b \in \BExp$ is a test.
Expressions are composed using sequencing, \texttt{if} statements, and \texttt{while} loops.
CF-GKAT extends the base elements of the syntax with indicator variable assignments \(x := i\) (for each $i \in I$), which changes the value of the indicator variable \(x\) to \(i\).
In addition, it adds the non-local control flow commands $\comBrk$ and $\comRet$, as well as $\comGoto{\ell}$ and $\comLabel{\ell}$, where $\ell$ is taken from a fixed but finite set of labels $L$.
The full syntax is given below (additions relative to GKAT highlighted in \textcolor{blue}{blue} again).
\begin{align*}
 \Exp  \ni  e, f ::= {}&
 \comAssert{b}
  \mid  p  \in   \Sigma
  \mid  {\color{blue}x := i\ (i  \in  I)}
  \mid  e; f
  \mid  \comITE{b}{e}{f}  \mid  {} \\
 &
 \comWhile{b}{e}
  \mid  {\color{blue} \comBrk}
  \mid  {\color{blue} \comRet}
  \mid  {\color{blue} \comGoto{ \ell }\ ( \ell   \in  L)}
  \mid  {\color{blue} \comLabel{ \ell }\ ( \ell   \in  L)}
\end{align*}


A \emph{valid program}, or \emph{program} for short, is an expression without (1)~duplicate labels, (2)~$\command{goto}$ commands with an undefined label, or (3)~$\comBrk$ statements that occur outside a loop.
For the sake of simplicity, we assume that the reader does not require a more formal definition of this notion.

\begin{example}
 Any GKAT expression is a valid program.
 Also, \cref{prog: break version two state automaton,%
  prog: goto version two state automaton,%
  prog: indicator version two state automaton}
 from the introduction are all valid CF-GKAT programs.
 The following expressions are \emph{not} valid programs:
 \begin{align*}
  \comLabel{ \ell }; (\comITE{t}{\comLabel{ \ell }; p}{q}) \tag{the label \( \ell \) is defined twice} \\
  (\comWhile{\true}{p}); \comGoto{ \ell } \tag{the label \( \ell \) is undefined} \\
  \comITE{t}{\comBrk}{p} \tag{$\comBrk$ appears outside a loop}
 \end{align*}
\end{example}

\subsection{Boolean Semantics}

To assign a semantics to CF-GKAT expressions, we first need to talk about the semantics of tests.
Intuitively, each test in a CF-GKAT expression symbolically denotes a set of execution contexts in which it is true.
But how do we model an execution context?
Because the primitive tests from $T$ are uninterpreted, we represent them by simply listing the ones that are true; the unlisted tests are then assumed to be false.
This is in line with how the semantics of (G)KAT handles tests~\cite{smolka_GuardedKleeneAlgebra_2020,kozen_KleeneAlgebraTests_1997}. 
As for the indicator tests, we include the current value of the indicator variable $x$ in the execution context.
This means that each execution context is of the form $(i,  \alpha )$, for $i  \in  I$ and $ \alpha   \subseteq  T$.

Putting these ideas together, we can calculate the set of execution contexts that satisfy a given test \(b  \in  \BExp\) by induction.
This set will be regarded as the semantics of \(b\).

\begin{definition}
 Let $\At$ denote $2^T$, the set of \emph{atoms} of (the free Boolean algebra generated by) $T$.
 We define the \emph{Boolean semantics} function $ \lBrack - \rBrack : \BExp  \to  2^{I  \times  \At}$ inductively,
 as follows.
 \begin{align*}
   \lBrack  \false  \rBrack  &  \triangleq   \emptyset
    &  \lBrack  t  \rBrack &  \triangleq  \{ (i,  \alpha )  \mid  i  \in  I, t  \in   \alpha  \}
    &  \lBrack  b  \lor  c  \rBrack  &  \triangleq   \lBrack  b  \rBrack   \cup   \lBrack  c  \rBrack  \\
   \lBrack  \true  \rBrack   &  \triangleq  I  \times  \At
    &  \lBrack  x = i  \rBrack  &  \triangleq  \{ (i,  \alpha )  \mid   \alpha   \in  \At \}
    &  \lBrack  b  \land  c  \rBrack  &  \triangleq   \lBrack  b  \rBrack   \cap   \lBrack  c  \rBrack  \\
  & & & &  \lBrack   \neg  b  \rBrack  &  \triangleq  I  \times  \At \setminus  \lBrack  b  \rBrack
 \end{align*}
\end{definition}

\begin{example}
 Take $T = \{ t_1, t_2 \}$ and $I = \{ 1, 2, 3 \}$; then we can calculate that
 \[
   \lBrack  (t_1  \lor   \neg  t_2)  \land  (x = 2)  \rBrack  = \{
  (\{ t_1, t_2\}, 2),
  (\{ t_1 \}, 2),
  ( \emptyset , 2)
  \}
 \]
 In other words, the test above holds in execution contexts where the indicator has value $2$, and either $t_1$ and $t_2$ are both true (first element), both false (last element), or where $t_{1}$ is true but $t_{2}$ is false (middle element).
 In contrast, $ \lBrack  x = 1  \land  x = 3 \rBrack  =  \emptyset $, which is to say that this test does not hold in any execution context, because the indicator variable \(x\) cannot be both 1 and 3 at the same time.
\end{example}

\subsection{Guarded Language with Continuations}\label{sec:continuation-semantics}

We can now turn our attention to the semantics of CF-GKAT\@.
Like (G)KAT, the semantics of CF-GKAT is given in terms of \emph{guarded languages}~\cite{smolka_GuardedKleeneAlgebra_2020,kozen_KleeneAlgebraTests_1997}, which are best thought of as sets of symbolic traces of the program. 
These traces record the initial, intermediate and final machine states observed during execution, as well as the actions that occurred between those states. 
Because the value of the indicator variable matters only for control flow, we do not consider indicators to be part of the machine state; hence, machine states in a guarded word are drawn from $\At$.
\begin{definition}
 A \emph{guarded word} is a sequence of the form $ \alpha _{1} p_{1}  \alpha _{2} p_{2}  \cdots   \alpha _{n-1} p_{n}  \alpha _{n}$, where \( \alpha _{i}  \in  \At\) and \(p_{i}  \in   \Sigma \); that is to say, guarded words are elements of the regular language $\At  \cdot  ( \Sigma   \cdot  \At)^*$.
 We refer to sets of guarded words as \emph{guarded languages}; the set of guarded languages is denoted $\mathcal{G}$.
\end{definition}
\begin{example}
 Let $T = \{ t_{1}, t_{2} \}$ and $\Sigma = \{ p_{1}, p_{2} \}$.
 Now the guarded word $\{ t_{1} \}p_{1}\{ t_2 \}p_{2}\emptyset$ represents a program trace that starts out in a machine state where $t_1$ is true (but $t_{2}$ is not).
 The program then executes the action $p_{1}$, after which $t_2$ is true (but $t_{1}$ is not).
 Finally, the program goes on to execute the action $p_{2}$, and halts in a state where neither $t_{1}$ nor $t_{2}$ is true.
\end{example}

Our semantics of a CF-GKAT expression will ultimately be a guarded language.
However, due to the non-local nature of our language, we cannot compute this semantics inductively, as the label in \command{goto} may occur in a different part of the program whose traces have not yet been computed.
To address this challenge, we introduce an intermediate \emph{continuation semantics}, which accounts for both the indicator variables and the non-local control flow.
The remainder of this subsection explains the domain of continuation semantics, based on \emph{guarded words with continuations}.
Intuitively, these are guarded words equipped with a piece of information called a \emph{continuation}, which indicates how control flow continues after the program ends.
The inclusion continuation information at the end of a trace allows us to define a semantics of CF-GKAT expressions inductively.
Once the continuation semantics of a CF-GKAT program is known, we can flatten it into a guarded language.

\begin{definition}%
\label{def:guarded-word-with-continuation}
 A \emph{guarded word with continuation} is a pair $w  \cdot  c$,
 where $w$ is a guarded word and $c$ is a \emph{continuation},
 which can take on one of the following forms for $i  \in  I$ and $ \ell   \in  L$:
 \begin{mathpar}
  \acc{i} \and
  \brk{i} \and
  \ret \and
  \jmp{( \ell , i)}
 \end{mathpar}
 We write $C$ for the set of all continuations.
 A set of guarded words with continuations is a \emph{guarded language with continuations}; the set of guarded languages with continuations is written $\mathcal{C}$.
\end{definition}
Intuitively, the different types of continuation may be interpreted as follows:
\begin{itemize}
    \item
    The continuation $\acc{i}$ represents that the trace has successfully reached the end of this part of the program, with indicator value \(i\).
    Execution can be picked up if the program is put in a larger context --- e.g., if $w \cdot \acc{i}$ is a trace of $e$, then it may be combined with a trace found when $f$ is executed with indicator value $i$ to compute the semantics of $e; f$.
    \item
    A continuation of the form $\brk{i}$ signals that the trace ends by halting the loop in which it occurs.
    Execution can resume only after this loop (with indicator value $i$).
    This kind of trace cannot be composed on the right, as is done for traces with accepting continuations, because we first need to enclose it in a loop to halt; it will then be converted into $\acc{i}$.
    \item
    The continuation $\ret$ represents a trace that ends in the program halting completely.
    Traces of this kind will percolate upwards in the semantics, without changing their continuation.
    These are intended to model the $\comRet$ statement, which halts the program no matter how deeply it is nested.
    In this case, the indicator value does not matter any more.
    \item
    Finally, the continuation $\jmp{(\ell, i)}$ is put on traces that will continue executing from label $\ell$, with indicator value $i$.
    Like $\brk{i}$ and $\ret$, these traces do not compose on the right, but unlike $\brk{i}$ this continuation does not change, as jump resolution happens only at the end, when the continuation semantics is known for the entire program.
\end{itemize}

\begin{example}
 Let $w$ be the guarded word from the previous example;
 the guarded word with continuation $w  \cdot  \jmp{( \ell _{1}, 2)}$
 represents a partial program trace that takes the steps in $w$,
 continues executing at the label $ \ell _{1}$ with indicator value $2$.
 More examples appear in the next section.
\end{example}

\subsection{Indexed Families and Sequencing}
The continuation semantics of a CF-GKAT expression takes a starting indicator value, and produces a guarded language with continuations representing the traces of that program when started with the given indicator value.
To properly encode this, we need the following notion.
\begin{definition}
An \emph{indexed family} of guarded languages (with continuations), or ``indexed family'' for the sake of brevity, is function from $I$ to guarded languages (with continuations).
Typically, we use \(G\) and \(H\) to denote an indexed family.
To lighten notation, we write \(G_{i}\) to denote \(G(i)\).
\end{definition}

Similar to guarded languages, indexed families can be composed in several ways.
In particular, we are interested in the sequencing operation and the Kleene star operation of indexed families, because these will turn out to be useful when defining the continuation semantics of CF-GKAT\@.

When sequencing two families \(G\) and \(H\), the traces in \(G_i\) with a continuation of the form $\acc{j}$ will be composed with traces in \(H_j\); traces with different continuations are copied over in full, because they do not compose on the right.
Formally, this operation is defined as follows.

\begin{definition}%
\label{def:sequencing}
 Let $G, H: I \to \mathcal{C}$.
 We write $G  \diamond  H$ for the \emph{sequencing} (or \emph{concatenation}) operation of \(G\) and \(H\), which is defined as the smallest family of guarded languages with continuations (in the pointwise order) satisfying the following rules for all $i,j  \in  I$ as well as all $ \ell   \in  L$:
 \begin{mathpar}
  \inferrule{%
   w \alpha   \cdot  \acc{j}  \in  G_{i} \\
    \alpha x  \cdot  c  \in  H_{j}
  }{%
   w \alpha x  \cdot  c  \in  (G  \diamond  H)_{i}
  }
  \and
  \inferrule{%
   w  \cdot  c  \in  G_{j} \\
   c \in \{ \brk{i}, \acc{i}, \jmp{(\ell, i)} \}
  }{%
   w  \cdot  c  \in  (G  \diamond  H)_{j}
  }
 \end{mathpar}
\end{definition}
The first rule composes accepting traces in $G$ with traces in $H$, picking up with the indicator value where the first trace left off.
Note also that this rule requires the last atom in the trace on the left to match the first atom in the trace on the right, because we want the second trace to start from the machine state computed in the first trace.
This mirrors the \emph{coalesced product} used to define the sequential composition of guarded languages (without continuations)~\cite{kozen_KleeneAlgebraTests_1997}.
The last rule ensures that traces that encountered non-local control flow within $G$ are preserved in $G  \diamond  H$.

\begin{example}%
\label{example:sequencing}
 Let $I = \{1,2\}$, and let $G$ and $H$ be indexed families given by:
 \begin{align*}
  G_{1} & = \{  \alpha p \beta   \cdot  \brk{1},\;  \beta p \alpha   \cdot  \acc{2} \}
    & G_{2}& = \{  \alpha q \beta  \cdot \acc{1} \} \\
  H_{1} & = \{  \gamma q \beta   \cdot  \ret \}
    & H_{2} & = \{  \alpha r \beta   \cdot  \jmp{( \ell _{1}, 1)} \}
  \intertext{
   Then we can compute that the sequencing $G  \diamond  H$ is the following indexed family:
  }
  (G  \diamond  H)_{1} & = \{  \alpha p \beta   \cdot  \brk{1},\;  \beta p \alpha r \beta   \cdot  \jmp{( \ell _{1}, 1)} \}
           & (G  \diamond  H)_{2} & =  \emptyset
 \end{align*}
 Here, $(G \diamond H)_1$ contains $\alpha p \beta \cdot \brk{1}$ by the second rule, because $G_1$ does.
 Furthermore, the trace $\beta p \alpha \cdot \acc{2}$ in $G_1$ is composed with $\alpha r \beta \cdot \jmp{(\ell_1, 1)}$ from $H_2$ to form $\beta p \alpha r \beta \cdot \jmp {(\ell_1, 1)}$ in $(G \diamond H)_1$, by the first rule.
 The set $(G \diamond H)_2$ is empty, because despite the fact that $ \alpha q \beta  \cdot \acc{1} \in G_2$, there is no trace in $H_1$ that starts with $\beta$, and so neither rule can apply.
\end{example}

\subsection{Continuation Semantics}
With the theory of indexed families in place, we can now define the continuation semantics $ \lBrack e \rBrack ^ \sharp $ of a CF-GKAT program $e$ in terms of an indexed family.
We start with the base cases.

\begin{definition}[Continuation semantics, base]
 For all $i, j  \in  I$, we define the following sets:
 \begin{align*}
   \lBrack \comAssert{b} \rBrack _{i}^ \sharp  &  \triangleq  \{  \alpha   \cdot  \acc{i}  \mid  (i,  \alpha )  \in   \lBrack  b  \rBrack  \}
    &  \lBrack  \comGoto{ \ell }  \rBrack _{i}^ \sharp  &  \triangleq  \{  \alpha   \cdot  \jmp{( \ell , i)}  \mid   \alpha   \in  \At \} \\
   \lBrack p \rBrack _{i}^ \sharp              &  \triangleq  \{ \alpha  p  \beta   \cdot  \mathbf{acc}\ i  \mid   \alpha ,  \beta   \in  \At\}
    &  \lBrack  \comLabel{ \ell }  \rBrack _{i}^ \sharp  &  \triangleq  \{  \alpha   \cdot  \acc{i}  \mid   \alpha   \in  \At \} \\
   \lBrack  x := j  \rBrack _{i}^ \sharp  &  \triangleq  \{  \alpha   \cdot  \acc{j}  \mid   \alpha   \in  \At \}
    &  \lBrack  \comBrk  \rBrack _{i}^ \sharp  &  \triangleq  \{  \alpha   \cdot  \brk{i}  \mid   \alpha   \in  \At \} \\
   \lBrack  \comRet  \rBrack _{i}^ \sharp  &  \triangleq  \{  \alpha   \cdot  \ret  \mid   \alpha   \in  \At \}
 \end{align*}
 \end{definition}

Each of these base syntax elements yields a finite indexed family.
For the constructs $\command{return}$, $\command{goto}$, and $\command{break}$, all traces terminate immediately in the corresponding continuation.

We inherit the semantics of assertions and primitive actions from (G)KAT~\cite{kozen_KleeneAlgebraTests_1997,smolka_GuardedKleeneAlgebra_2020}. 
Assertions have traces that accept when their only atom satisfies the test.
A primitive action $p$ has traces of the form $\alpha p \beta \cdot \acc{i}$ for all $\alpha, \beta \in \At$: because $p$ is uninterpreted, we can reach any machine state by running $p$.
Only the indicator value is retained, because primitive actions cannot interact with indicators.
Assignments like $x := j$ accept immediately, without changing the machine state; however, each trace ends with indicator value $j$ --- regardless of the initial indicator value $i$.

Finally, labels are encoded as no-operations, which makes them neutral for sequencing operator, i.e., we have $ \lBrack  \comLabel{ \ell }  \rBrack ^ \sharp   \diamond  G = G =  \lBrack  \comLabel{ \ell }  \rBrack ^ \sharp   \diamond  G$ for all indexed families $G$.
This is because labels serve only as potential starting points of execution; we will leverage them in the next subsection.

\begin{remark}
  For soundness, it is important that the indicator variable $x$ does not occur in any primitive test $t  \in  T$ or action $p  \in   \Sigma $.
  Concretely, primitive tests do not restrict the current value indicator variable, and primitive actions preserve the indicator value after their executions.
\end{remark}

We now turn our attention to the program composition operators.
These are generalizations of the guarded language semantics of GKAT~\cite{smolka_GuardedKleeneAlgebra_2020}.
First of all, the $\comITE{b}{e}{f}$ filters out traces in the semantics of the $e$ that satisfy the guard $b$, as well as the traces in $f$ that invalidate it.

 \begin{definition}[Continuation semantics, branching]
 Let $e, f \in \Exp$.
 We set $ \lBrack  \comITE{b}{e}{f}  \rBrack ^ \sharp $ to be the least indexed family that satisfies the following rules for all $i \in I$:
 \begin{mathpar}
    \inferrule{
        (i, \alpha) \in  \lBrack  b  \rBrack  \\
        \alpha{}w \cdot c \in  \lBrack  e  \rBrack ^ \sharp _{i}
    }{%
        \alpha{}w \cdot c \in  \lBrack  \comITE{b}{e}{f}  \rBrack ^ \sharp _{i}
    }
    \and
    \inferrule{
        (i, \alpha) \not\in  \lBrack  b  \rBrack  \\
        \alpha{}w \cdot c \in  \lBrack  f  \rBrack ^ \sharp _{i}
    }{%
        \alpha{}w \cdot c \in  \lBrack  \comITE{b}{e}{f}  \rBrack ^ \sharp _{i}
    }
 \end{mathpar}
 \end{definition}

 The semantics of the sequencing operator is easy: it just composes the semantics of the operands with the sequencing operator we have for indexed families.
 For loops, some more care is needed because traces can be iterated, and we need to account for early termination.

 \begin{definition}[Continuation semantics, sequencing and loops]%
 \label{def:intermediate-sequencing-loops}
 Let $e, f \in \Exp$ and $b \in \BExp$.
 We define
 \(
     \lBrack e; f \rBrack ^ \sharp   \triangleq   \lBrack  e  \rBrack ^ \sharp   \diamond   \lBrack  f  \rBrack ^ \sharp
 \).
 Also, $ \lBrack \comWhile{b}{e} \rBrack ^ \sharp $ is the least indexed family satisfying for $i \in I$:
 \begin{mathpar}
    \inferrule{%
        (i, \alpha) \not\in  \lBrack  b  \rBrack
    }{%
        \alpha \cdot \acc{i} \in  \lBrack \comWhile{b}{e} \rBrack ^ \sharp _{i}
    }
    \and
    \inferrule{%
        (i, \alpha) \in  \lBrack  b  \rBrack  \\
        \alpha w \cdot c \in ( \lBrack  e  \rBrack ^ \sharp   \diamond   \lBrack \comWhile{b}{e} \rBrack ^ \sharp )_{i}
    }{%
        \alpha w \cdot \lfloor c \rfloor \in  \lBrack \comWhile{b}{e} \rBrack ^ \sharp _i
    }
 \end{mathpar}
 The operation $\lfloor - \rfloor$ in the last rule is defined by $\lfloor c \rfloor = \acc{i}$ when $c = \brk{i}$, and $\lfloor c \rfloor = c$ otherwise.
\end{definition}


The first rule accounts for traces that halt immediately because the loop guard is false.
The second rule allows prepending traces from the loop body that satisfy the guard.
Because of the way sequencing works, body traces that end in $\brk{i}$ may occur; the second rule converts their continuations to $\acc{i}$, signaling that the loop has been exited and normal control flow can resume.

\smallskip
The semantics we have so far defines the traces of a program starting from the beginning.
However, a CF-GKAT program can also resume from a label when a \command{goto} command is encountered.
To obtain the traces for a given label $\ell$, we must descend into the program until we encounter \(\comLabel{ \ell }\).
For the base cases, this is relatively simple to accomplish: just check if we start at the label.

\begin{definition}[Continuation semantics starting from a label, base]
 Let $e  \in  \Exp$.
 For each $ \ell   \in  L$, we define the following guarded languages with continuations:
 \begin{mathpar}
   \lBrack \comAssert{b} \rBrack _{i}^ \ell    \triangleq   \emptyset  \and
   \lBrack  \comGoto{ \ell '}  \rBrack _{i}^ \ell    \triangleq   \emptyset  \and
   \lBrack p \rBrack _{i}^ \ell   \triangleq   \emptyset  \and
   \lBrack x := j \rBrack _{i}^ \ell   \triangleq   \emptyset  \and
   \lBrack  \comBrk  \rBrack _{i}^ \ell   \triangleq   \emptyset  \and
   \lBrack  \comRet  \rBrack _{i}^ \ell   \triangleq   \emptyset  \and
   \lBrack  \comLabel{ \ell '}  \rBrack _{i}^ \ell   \triangleq  \{  \alpha   \cdot  \mathbf{acc}\ i  \mid   \alpha   \in  \At,\  \ell  =  \ell ' \}
 \end{mathpar}
\end{definition}
Note how none of these cases has a trace, except the one for $ \lBrack  \comLabel{ \ell '}  \rBrack _{i}^ \ell $ when $\ell' = \ell$, which accepts immediately.
With these cases covered, we can then treat the inductive step.

\begin{definition}[Continuation semantics starting from a label, sequencing and branching]
Let $e, f \in \Exp$, $b \in \BExp$ and $\ell \in L$.
We define the following indexed families to cover the traces of CF-GKAT programs starting from the label $\ell$ when composed using branching or sequencing:
\begin{mathpar}
     \lBrack \comITE{b}{e}{f} \rBrack ^ \ell _{i}  \triangleq   \lBrack  e  \rBrack ^ \ell _{i} \cup  \lBrack  f  \rBrack ^ \ell _{i}
    \and
     \lBrack e; f \rBrack ^ \ell _{i}  \triangleq  ( \lBrack  e  \rBrack ^ \ell   \diamond   \lBrack  f  \rBrack ^ \sharp )_{i} \cup  \lBrack  f  \rBrack ^ \ell _{i}
\end{mathpar}
\end{definition}
For branching, the semantics starting from $\ell$ disregards the guard and descends into the operands.
The sequencing case is more interesting: here, we still need to account for the traces that start from the beginning of $f$ after executing a trace in $e$ starting from the label $ \ell $.

The only remaining case to cover is the loop.
If we start execution from a label somewhere in the loop body, we may need to start the loop again after completing the loop body.
On the other hand, early termination in the loop body still needs to be turned into an accepting trace.

\begin{definition}[Continuation semantics starting from a label, loops]
Let $e \in \Exp$ and $b \in \BExp$.
We define the indexed family $ \lBrack \comWhile{b}{e} \rBrack ^ \ell $ below, where $\lfloor - \rfloor$ is as in \Cref{def:intermediate-sequencing-loops}:
\[
     \lBrack \comWhile{b}{e} \rBrack ^ \ell _i = \{ w \cdot \lfloor c \rfloor \mid w \cdot c \in ( \lBrack  e  \rBrack ^ \ell   \diamond   \lBrack \comWhile{b}{e} \rBrack ^ \sharp )_{i} \}
\]
\end{definition}

\subsection{Guarded Language Semantics}

The continuation semantics $ \lBrack  e  \rBrack ^ \sharp $ of a CF-GKAT program uses continuations to record how a trace ends.
In particular, some traces may end in a continuation of the form $\jmp{(\ell, i)}$, signaling that computation needs to continue from the label $\ell$.
The semantics \( \lBrack  e  \rBrack ^ \ell \) provides the necessary information to resolve such jump continuations: we can resume \(\jmp{( \ell , i)}\) by concatenating with the traces in \( \lBrack e \rBrack ^ \ell _{i}\).
We will end this section by doing just that.

To formalize our approach, we need a way to refer to the continuation CF-GKAT semantics of a program as a whole, i.e., for all indicator values, starting from either the beginning or some label.

\begin{definition}
 A \emph{labeled family of guarded languages (with continuations)}, or \emph{labeled family} for short, is a function $G$ from $L +  \sharp $ to indexed families of guarded languages (with continuations), e.g., $G: L +  \sharp   \to  I  \to  \mathcal{C}$.
 We often use superscripts to denote the value at a given label $ \sharp $, writing $G^ \ell $ for $G( \ell )$.
 Note that under this convention, $G^ \ell $ is an indexed family, which means that we may further unravel by writing $G^ \ell _i$ to obtain the guarded language with continuations $G( \ell , i)$.
\end{definition}

Crucially, we can retrofit the continuation semantics $ \lBrack  e  \rBrack $ as a labeled family; after all, $ \lBrack  e  \rBrack ^\sharp$ is an indexed family, and so is $ \lBrack  e  \rBrack ^\ell$ for each $\ell \in L$.
We will treat $ \lBrack  e  \rBrack $ as such from this point on.

\smallskip
To resolve the jumps in a labeled family of guarded languages with continuations, we resolve the traces ending in $\jmp{( \ell , i)}$ by looking up the traces that originate from label $ \ell $ with indicator value $i$.
We also remove the continuations $\acc{i}$ and $\ret$, because those come with traces that either reached the end of the program, or encountered a $\comRet$ statement respectively.
Continuations of the form $\brk{i}$ should not occur at the top level when computing the semantics of a program, so we can ignore them.
The result is a labeled family of guarded languages (without continuations).

\begin{definition}
 Let $G: L +  \sharp   \to  I  \to  \mathcal{C}$ be a labeled family of guarded languages with continuations.
 We write $G\! \downarrow $ for the (point-wise) least labeled family of guarded languages $G\!\downarrow$ such that the following rules are satisfied for all $k  \in  L +  \sharp $, $ \ell   \in  L$, and $i, j  \in  I$:
 \begin{mathpar}
  \inferrule{%
   w  \cdot  \acc{i}  \in  G_{i}^{k}
  }{%
   w  \in  G\! \downarrow _{i}^{k}
  }
  \and
  \inferrule{%
   w  \cdot  \ret  \in  G_{i}^{k}
  }{%
   w  \in  G\! \downarrow _{i}^{k}
  }
  \and
  \inferrule{%
   w \alpha   \cdot  \jmp{( \ell , j)}  \in  G_{i}^{k} \\
    \alpha x  \in  G\! \downarrow _{j}^ \ell
  }{%
   w \alpha x  \in  G\! \downarrow _{i}^{k}
  }
 \end{mathpar}
\end{definition}
The first two rules take care of flattening guarded words with continuations that end in acceptance.
The third rule strings together guarded words continuations that jump to a different label.

\begin{example}
 Let $G$ be the labeled family of guarded languages with continuations defined by
 \begin{align*}
  G_{1}^ \sharp  & = \{  \alpha   \cdot  \jmp{(\ell, 1)} \}
    & G_{2}^ \sharp  & =  \emptyset  \\
  G_{1}^{\ell} & = \{  \alpha  p  \alpha   \cdot  \jmp{(\ell', 1)},  \beta   \cdot  \acc{1} \}
    & G_{2}^{\ell} & = \{  \alpha   \cdot  \jmp{(\ell', 2)} \} \\
  G_{1}^{\ell'} & = \{  \alpha  q  \alpha   \cdot  \jmp{(\ell, 1)},  \alpha  r  \beta   \cdot  \jmp{(\ell, 1)} \}
    & G_{2}^{\ell'} & = \{  \alpha   \cdot  \jmp{(\ell, 1)}\}
 \end{align*}
 Now $G\! \downarrow _{1}^ \sharp $ contains, among other things,
 the guarded word $ \alpha  p  \alpha  q  \alpha  p  \alpha  r  \beta $.
 Furthermore, $G\! \downarrow _{2}^{\ell}$ is empty, despite $G_{2}^{\ell}$ and $G_{2}^{\ell'}$ containing guarded words with mutually jumping continuations, as these can never be concatenated into one guarded word with continuation of the form $\acc{i}$ or $\ret$.
\end{example}

In total, we can then obtain the semantics of a CF-GKAT term as $ \lBrack  e  \rBrack \!\downarrow$, in the form of a labeled family of guarded languages.
This concludes our discussion of the semantics of CF-GKAT\@.



\section{Decision Procedure}%
\label{section:decision procedure}

To decide whether two CF-GKAT expressions denote the same guarded language, we propose CF-GKAT automata.
We will show that every CF-GKAT expression can be converted to a CF-GKAT automaton with the same continuation semantics, by induction on the expression \emph{à la} Thompson~\cite{thompson_ProgrammingTechniquesRegular_1968}.
However, because CF-GKAT automata implement the continuation semantics, directly comparing them for language equivalence will not be sufficient.
For instance, the following programs exhibit identical trace semantics, but differ in their continuation semantics:
\begin{mathpar}
  x := 1 \and \comAssert{\true}.
\end{mathpar}
The first program sets the indicator variable to 1, and the second program simply does nothing.
These programs have the same trace semantics, because the trace semantics does not care about the final value of the indicator variable.
Yet, their continuation semantics are different: guarded words in \( \lBrack x := 1 \rBrack _{i}^ \sharp \) are always paired with the continuation \(\acc{1}\) regardless of \(i\), but \( \lBrack \comAssert{\true} \rBrack _{i}^ \sharp \) preserves the starting indicator by outputting the continuation \(\acc{i}\).
Furthermore, equivalence checking needs to account for jump resolution, by looking up the sequel of a trace ending in a continuation of the form $\jmp (i, \ell)$ in the traces starting from label $\ell$ with indicator value $i$.

Thus, our decision procedure does not check equivalence of CF-GKAT automata directly; instead, we \emph{lower} CF-GKAT automata into GKAT automata in a way that mirrors the move from the continuation semantics to the guarded language semantics.
This lets us reuse the decision procedure for GKAT automata equivalence~\cite{smolka_GuardedKleeneAlgebra_2020}.
The soundness and completeness of our decision procedure follows from the correctness of Thompson's construction (\Cref{the:thompson-correctness}), the correctness of the lowering (\Cref{the:cf-gkat-automaton-lowering-correctness}), and finally the correctness of GKAT automata equivalence checking~\cite{smolka_GuardedKleeneAlgebra_2020}.

\subsection{GKAT Automata}

To establish our decision procedure, we will first recap GKAT automata and their trace semantics.
\begin{definition}[GKAT automata~\cite{smolka_GuardedKleeneAlgebra_2020, kozen_BohmJacopiniTheorem_2008a}]
 A \emph{GKAT automaton} $A  \triangleq   \langle  S,  \delta , \hat{s}  \rangle $ consists of a set of \emph{states} $S$, a \emph{transition function} $\delta: S  \to  \At  \to   \bot  +  \top  +  \Sigma   \times  S$, and a \emph{start state} $\hat{s}  \in  S$.
\end{definition}
Intuitively, given a state $s$ and an atom $ \alpha $ accounting for the truth value of each primitive test, a GKAT automaton will either \emph{reject} the input, represented by $ \delta (s,  \alpha ) =  \bot $; \emph{accept} the input, represented by $ \delta (s,  \alpha ) =  \top $; or to \emph{transition} to a new state in $S$ after executing an action from $ \Sigma $, represented by $ \delta (s,  \alpha )  \in   \Sigma   \times  S$.
A GKAT automaton induces a guarded language, by tracing all the possible execution paths reaching an accepting transition.

\begin{definition}
 Given a GKAT automaton $A  \triangleq   \langle  S,  \delta , \hat{s}  \rangle $, we define $ \lBrack  -  \rBrack _A: S  \to  \mathcal{G}$ as the (pointwise) smallest function satisfying the following rules for all $s  \in  S$ and $ \alpha   \in  \At$:
 \begin{mathpar}
  \inferrule{%
    \delta (s,  \alpha ) =  \top
  }{%
    \alpha   \in   \lBrack  s  \rBrack _A
  }
  \and
  \inferrule{%
    \delta (s,  \alpha ) = (p, s') \\
   w  \in   \lBrack  s'  \rBrack _A
  }{%
    \alpha pw  \in   \lBrack  s  \rBrack _A
  }
 \end{mathpar}
 Finally, we define the guarded language semantics of $A$ by setting $ \lBrack  A  \rBrack  =  \lBrack  \hat{s}  \rBrack _A$.
\end{definition}

The trace equivalence of finite GKAT automata is decidable, which we record as follows.
\begin{theorem}[Decidability for GKAT~\cite{smolka_GuardedKleeneAlgebra_2020}]
 Given two finite GKAT automata $A_{0}$ and $A_{1}$, it is decidable whether they represent the same guarded language, i.e., whether $ \lBrack  A_{0}  \rBrack  =  \lBrack  A_{1}  \rBrack $.
 The algorithm to do this has a complexity that is nearly-linear\footnote{$\mathcal{O}(n \cdot \hat{ \alpha }(n))$, where $\hat{ \alpha }$ is the inverse Ackermann function; c.f.~\cite{Tarjan75}.} in the total number of states.
\end{theorem}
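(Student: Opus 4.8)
The plan is to reduce language equivalence of GKAT automata to bisimilarity, which is decidable in near-linear time precisely because GKAT automata are deterministic. The starting observation is that the transition function $\delta\colon S \to \At \to \bot + \top + \Sigma \times S$ assigns to each pair $(s, \alpha)$ a \emph{unique} outcome, so reading a guarded word from a state follows a single deterministic path. Consequently, whether a guarded word lies in $\lBrack s \rBrack_A$ is fully determined by successively applying $\delta$ along the atoms of the word and checking that the transition taken on the final atom is accepting. This determinism is what makes the language semantics amenable to a coinductive treatment.

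First I would define a notion of \emph{bisimulation} between the two automata: a relation $R$ on the disjoint union of states such that whenever $(s_0, s_1) \in R$, the outcomes $\delta_0(s_0, \alpha)$ and $\delta_1(s_1, \alpha)$ agree for every atom $\alpha$, in the sense that they are both $\bot$, both $\top$, or both of the form $(p, -)$ with the \emph{same} action $p \in \Sigma$ and $R$-related successors. I would then prove the characterization that $\lBrack \hat{s}_0 \rBrack_{A_0} = \lBrack \hat{s}_1 \rBrack_{A_1}$ if and only if $\hat{s}_0$ and $\hat{s}_1$ are related by some bisimulation. The backward direction is a straightforward induction on the length of guarded words, propagating membership along related states. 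The forward direction is where determinism is essential: the relation ``accepts the same guarded language'' is itself a bisimulation, since a mismatch of outcomes at some atom $\alpha$ would immediately exhibit a separating guarded word (the length-one word $\alpha$, or one beginning $\alpha p$).

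With this characterization in hand, deciding equivalence amounts to searching for a bisimulation containing $(\hat{s}_0, \hat{s}_1)$, which is carried out by a Hopcroft--Karp style procedure. I would maintain a union-find structure over the disjoint union of the state sets, merge $\hat{s}_0$ with $\hat{s}_1$, and then repeatedly take a merged pair, compare its outcomes on each atom $\alpha \in \At$, and either fail (if one side accepts while the other rejects, or the actions differ) or enqueue the successor pair to be merged. Because each successful union collapses two classes, at most $\lvert S_0 \rvert + \lvert S_1 \rvert - 1$ merges occur before the structure stabilizes, and the start states are equivalent exactly when no failure is ever reported.

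The main obstacle is the complexity bound rather than correctness. Establishing near-linearity requires the amortized analysis of the union-find data structure with path compression and union by rank, which yields the inverse-Ackermann factor $\hat{\alpha}$~\cite{Tarjan75}. Care is also needed to treat $\At = 2^T$ as having constant size, so that the per-state work of scanning all atoms is absorbed into the constant; this is exactly the standing assumption that the set of primitive tests $T$ is fixed. Since this result is established by Smolka et al.~\cite{smolka_GuardedKleeneAlgebra_2020}, I would ultimately defer to their construction for the detailed amortized accounting, relying on the bisimulation characterization above to make the reduction transparent.
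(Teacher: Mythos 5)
Your overall architecture --- characterize language equivalence coinductively, then search for a bisimulation with a Hopcroft--Karp/union-find procedure and invoke Tarjan's amortized bound --- is indeed the route taken in the cited source (the paper itself gives no proof; it imports the result from Smolka et al.). However, there is a genuine gap in your key lemma. You claim that the relation ``accepts the same guarded language'' is itself a bisimulation, so that language equivalence of the start states implies bisimilarity. This is false for arbitrary GKAT automata, and determinism does not save it. The problem is \emph{dead states}: states whose language is empty. Take $A_0$ with a single state $s_0$ and $\delta_0(s_0,\alpha)=\bot$ for every atom $\alpha$, and $A_1$ with states $s_1, d$ where $\delta_1(s_1,\alpha)=(p,d)$ and $\delta_1(d,\alpha)=\bot$ for every $\alpha$. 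Then $\lBrack s_0 \rBrack_{A_0} = \emptyset = \lBrack s_1 \rBrack_{A_1}$, since no word of the form $\alpha p\cdots$ is ever completed to an accepted guarded word through $d$; yet $s_0$ and $s_1$ are not bisimilar, because their outcomes on $\alpha$ disagree ($\bot$ versus a $p$-transition). Your proposed ``separating guarded word beginning $\alpha p$'' simply does not exist. So the forward direction of your characterization fails, and the union-find procedure run on the raw automata would wrongly report these two automata as inequivalent.

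The missing step is a \emph{normalization} (pruning) pass, which is exactly what the cited decision procedure does before checking bisimilarity: compute the set of live states as the least fixed point of ``has an accepting transition or a transition to a live state'' (a linear-time reachability computation for fixed $T$), and replace every transition into a dead state by $\bot$. Normalization preserves the guarded language of every state, and on normalized automata your coinductive characterization does hold --- a disagreement of outcomes at some atom then really does yield a separating word, because every non-$\bot$ transition leads to a state that accepts at least one word. With that lemma repaired, the rest of your argument (the backward direction by induction on word length, the merge-counting bound of $|S_0|+|S_1|-1$ unions, and the inverse-Ackermann amortization) goes through as you describe.
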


\subsection{CF-GKAT Automata}

To leverage the efficient decision algorithm for GKAT automata, we will need to convert each CF-GKAT expression $e$ and a starting indicator value \(i\) into a GKAT automaton that recognizes the guarded language $ \lBrack  e  \rBrack \! \downarrow _{i}^\sharp$.
As discussed, this process is separated into two steps, where we use CF-GKAT automata as an intermediate between CF-GKAT expressions and GKAT automata.
Like GKAT automata, CF-GKAT automata need a transition structure.
Having a separate type for this transition structure will turn out to be useful, so we isolate it as follows.
\begin{definition}[CF-GKAT dynamics]
 Given a set $S$, we write $G(S)$ for the set of possible \emph{CF-GKAT dynamics} on $S$, which is given by the following function type, where $C$ is as in \Cref{def:guarded-word-with-continuation}:
 \[G(S)  \triangleq  I  \times  \At  \to   \bot  + C +  \Sigma   \times  X  \times  I.\]
\end{definition}

The elements of \(G(S)\) represent transitions exiting a single (initial) state or label in a CF-GKAT automaton over a state set $S$.
Given the current indicator value \(i  \in  I\) and an atom $ \alpha $ accounting for the truth value of each primitive test, a dynamics $ \rho   \in  G(S)$ may either: \emph{reject} the input ($ \rho (i,  \alpha ) =  \bot $); offer a \emph{continuation} ($ \rho (i,  \alpha )  \in  \mathcal{C}$); or an action, next state, and indicator value ($ \rho (i,  \alpha )  \in   \Sigma   \times  X  \times  I$).

Then the definition of CF-GKAT automaton is similar to GKAT automaton, except we will need a function \( \lambda : L  \to  G(S)\) where \( \lambda ( \ell )\) provides a dynamics representing the ``entry point'' of label \( \ell \).
\begin{definition}
 A \emph{CF-GKAT automaton} \(A  \triangleq   \langle S,  \delta , \hat{s},  \lambda  \rangle \) consists of a set of \emph{states} \(S\), a \emph{transition function} \( \delta : S  \to  G(S)\),
 a \emph{start state} \(\hat{s}  \in  S\), and a \emph{jump map} \( \lambda : L  \to  G(S)\).
\end{definition}

The transition map \( \delta \) assigns every state a dynamics, while the jump map $ \lambda $ assigns a dynamics to each label, indicating how to resume computation when a jump continuation is reached.

\begin{example}[A simple CF-GKAT automaton]%
\label{example:cf-gkat-automaton}
  Let $I = \{1,2\}$ and consider the following program:
  \[
    \comITE{t \wedge (x = 1)}{\{x := 2; \comLabel{ \ell };\; p\}}{\{x := 1;\; \comGoto{ \ell }\}}.
  \]
  We can construct the following automaton with the same continuation semantics from state $\hat{s}$:
  \[
    \begin{tikzpicture}
        \node (shat) {$\hat{s}$};
        \node[right=2cm of shat] (s) {$s$};
        \draw ($(shat) + (0mm, 7mm)$) edge[-latex] (shat);
        \draw (shat) edge[-latex] node[above] {\footnotesize $1,\{t\} \mid p, 2$} (s);
        \node[left=10mm of shat] (shatacc) {\footnotesize $\jmp (\ell, 1)$};
        \draw (shat) edge[double,double distance=2pt,-implies] node[above,align=center] {%
            \footnotesize $1, \emptyset$ \\[-1mm]
            \footnotesize $2, -$
        } (shatacc);
        \node[right=10mm of s] (sacc) {\footnotesize $\acc{i}$};
        \draw (s) edge[double,double distance=2pt,-implies] node[above] {\footnotesize $i, -$} (sacc);
    \end{tikzpicture}
  \]
  Here, \(\hat{s} \xrightarrow{1, \{t\} \mid p, 2} s\) means that \( \delta (\hat{s}, 1, \{t\}) = (p, s, 2)\) and \(s \xRightarrow{i, -} \acc{i}\) means that \( \delta (s, i, \alpha) = \acc{i}\) for $\alpha \in \At$ and $i \in I$.
  The behavior of this automaton, when starting from \(\hat{s}\), is as follows.
  \begin{itemize}
    \item
    If the input atom is \(\{t\}\) --- that is, the test $t$ is true --- and the indicator is $1$, we transition to the state \(s\), while setting the indicator to $2$ and executing \(p\) (and skipping over label $\ell$).
    Upon reaching $s$, the automaton accepts unconditionally, preserving the indicator.
    \item
    Otherwise, if the input atom is $\emptyset$ --- that is, the test $t$ is false --- or the indicator value is anything other than $1$, the automaton will simply offer the continuation \(\jmp{( \ell ,1)}\), thereby telling execution to continue from label $\ell$ with indicator value $1$.
  \end{itemize}
  As we can see, the behavior of \(\hat{s}\) indeed matches the behavior of the program when executing from the start.
  On the other hand, the entry dynamics for \( \ell \) can be defined as follows:
  \[ \forall  i  \in  I,  \alpha   \in  \At,  \lambda ( \ell , i,  \alpha )  \triangleq  (p, s, i).\]
  To put the above definition into words: when jumping to the label \( \ell \), we will reach the state \(s\) while executing \(p\).
  Thus, the behavior of \( \lambda ( \ell )\) matches the behavior of the program when starting from \( \ell \).
\end{example}

\begin{remark}
We could have opted to instrument CF-GKAT automata with a \emph{state} for each label $\ell$, rather than a dynamics.
This would simplify their definition, at the cost of complicating the Thompson construction (\Cref{sec:thompson-construction}).
An additional advantage of our approach is that we avoid creating unreachable states when lowering CF-GKAT automata to GKAT automata (\Cref{sec:lowering-cf-gkat-automata-to-gkat}).
\end{remark}

To formalize the intuition of ``behaviors'', as seen in the previous example, we will assign a continuation semantics to each CF-GKAT automaton.
It is convenient to first assign a semantics to each dynamics \( \rho   \in  G(S)\) in a CF-GKAT automaton \(A  \triangleq   \langle S,  \delta , \hat{s},  \lambda  \rangle \), instead of every state.

\begin{definition}[continuation semantics]
 Given an automaton \(A  \triangleq   \langle S,  \delta , \hat{s},  \lambda  \rangle \),
 the continuation semantics of each dynamics \( \rho   \in  G(S)\) is an indexed family \( \lBrack  \rho  \rBrack _A: I  \to  \mathcal{C}\), defined as the (point-wise) smallest set satisfying the following rules for $i, j  \in  I$, $ \alpha   \in  \At$ and $c \in C$:
 \begin{mathpar}
  \inferrule{%
    \rho (i,  \alpha ) = \acc{j}
  }{%
    \alpha   \cdot  \acc{j}  \in  ( \lBrack  \rho  \rBrack _A)_{i}
  }
  \and
  \inferrule{%
    \rho (i,  \alpha ) = \brk{j}
  }{%
    \alpha   \cdot  \brk{j}  \in  ( \lBrack  \rho  \rBrack _A)_{i}
  }
  \and
  \inferrule{%
    \rho (i,  \alpha ) = \ret
  }{%
    \alpha   \cdot  \ret  \in  ( \lBrack  \rho  \rBrack _A)_{i}
  }
  \\
  \inferrule{%
    \rho (i,  \alpha ) = \jmp{( \ell , j)}
  }{%
    \alpha   \cdot  \jmp{( \ell , j)}  \in  ( \lBrack  \rho  \rBrack _A)_{i}
  }
  \and
  \inferrule{%
    \rho (i,  \alpha ) = (p, s, j) \\
   w \cdot c  \in  ( \lBrack  \delta (s) \rBrack _A)_{j}
  }{%
    \alpha pw \cdot c  \in  ( \lBrack  \rho  \rBrack _A)_{i}
  }
 \end{mathpar}
 Similar to the continuation semantics of expressions, the continuation semantics of automata are also labeled families of guarded languages with continuations: the semantics from the start \( \lBrack A \rBrack ^ \sharp \) is defined by the dynamics of the start state, and that of a label \( \ell   \in  L\) is defined by the jump map:
 \begin{mathpar}
   \lBrack  A  \rBrack ^ \sharp  =  \lBrack   \delta (\hat{s})  \rBrack _A, \and
   \lBrack  A  \rBrack ^ \ell  =  \lBrack   \lambda ( \ell )  \rBrack _A \text{ for }  \ell   \in  L.
 \end{mathpar}
\end{definition}

\begin{example}
Let the CF-GKAT automaton from \Cref{example:cf-gkat-automaton} be \(A\), we find that
\begin{mathpar}
   \lBrack  A  \rBrack ^ \sharp _1 = \{ \emptyset \cdot \jmp{(\ell, 1)}, \{t\} p \emptyset \cdot \acc{2}, \{t\} p \{t\} \cdot \acc{2} \}
  \and
   \lBrack  A  \rBrack ^ \sharp _2 = \{ \emptyset \cdot \jmp{(\ell, 1)}, \{t\} \cdot \jmp{(\ell, 1)} \}
  \and
   \lBrack  A  \rBrack ^ \ell _1 =  \{ \alpha p \beta \cdot \acc{1} : \alpha, \beta \in \At \}
  \and
   \lBrack  A  \rBrack ^ \ell _2 = \{ \alpha p \beta \cdot \acc{2} : \alpha, \beta \in \At \}
\end{mathpar}
\end{example}

\subsection{Lowering CF-GKAT Automata to GKAT Automata}\label{sec:lowering-cf-gkat-automata-to-gkat}

The process to lower a CF-GKAT automaton $ \langle S,  \delta , \hat{s},  \lambda  \rangle $ into a GKAT automaton consists of two different components.
We first ``embed'' the indicator values into the state set; the new state set then becomes $S  \times  I$.
After that, we resolve all continuations in transitions using the jump map.

The second step requires some care.
When $ \delta (s, i,  \alpha ) = \jmp{( \ell , j)}$, the $ \alpha $-transition leaving the state $(s, i)$ will take the behavior of \( \ell \) starting from indicator \(j\) and atom \( \alpha \), that is \( \lambda ( \ell , j,  \alpha )\).
This by itself is alright, but we also need to account for subprograms such as $\comLabel{l};\; x := k;\; \comGoto{\ell'}$, which entails that $ \lambda ( \ell , j,  \alpha )$ points to a different label by returning $\jmp{( \ell ', k)}$.
In turn, \( \lambda ( \ell ', k,  \alpha )\) may also yield another jump, et cetera.
To resolve these chained jumps, we will need to iterate the jump map, and terminate when either the result is no longer a jump, or an infinite loop is detected.

\begin{definition}[iteration lifting]\label{def: iteration lifting}
  Given a function \(h: X  \to  X +  \bot  + E\), where \(X\) is a finite set and \( \bot  + E\) specifies the ``exit results'', we can iterate $h$ until some exit value is reached, or a loop is detected.
  Formally, we define $\iter(h)$ as the least function satisfying for all $x \in X$ that
  \begin{mathpar}
    \iter(h)(x) =
        \begin{cases}
        \iter(h)(h(x)) & h(x) \in X \\
        h(x) & h(x) \in E
        \end{cases}
  \end{mathpar}
  in the directed-complete partial order (DCPO) on functions from $X$ to $X + \bot + E$ where $f \leq g$ when for all $x \in X$, we have that $f(x) = \bot$ or $f(x) = g(x)$.
  This makes $\iter(h)$ the least fixed point of the Scott-continuous function on this DCPO that sends $f$ to
  \[
    x \mapsto
        \begin{cases}
        f(h(x)) & h(x) \in X \\
        h(x) & h(x) \in E + \bot
        \end{cases}
  \]
  By Kleene's fixed point theorem this uniquely defines $\iter(h)$.
\end{definition}

\begin{remark}
  When $X$ is finite, we can directly compute $\iter(h)$ by iterating $h$.
  More precisely, we can define a helper function $\iter'(h): 2^X  \to  X  \to   \bot  + E$ with an additional parameter, as follows:
  \begin{align*}
  \iter'(h) & (M)(x)  \triangleq  \begin{cases}
       \bot  & \text{if } x  \in  M  \\
      h(x) & \text{if } x  \notin  M \text{ and } h(x)  \in   \bot  + E \\
      \iter'(h)(M  \cup  \{x\})(h(x)) & \text{if } x  \notin  M \text{ and } h(x)  \in  X
    \end{cases};
  \end{align*}
  It is then not too hard to show that $\iter(h) = \iter'(h)(\emptyset)$.
  Intuitively, \(M\) keeps track of explored values in \(X\); if an input has already been explored, then \(\iter\) will return \( \bot \).
  On the other hand, if \(h(x)\) falls into the exit set \( \bot  + E\), then \(\iter(h)\) will exit and return \(h(x)\).
  Finally, if \(h(x)\) falls into \(X\), then \(\iter(h)\) will continue to iterate with \(h(x)\) as input, and mark \(x\) as explored.
  Note that \(\iter(h)\) is total because \(X\) is finite, so elements in \(X\) cannot be explored twice.
\end{remark}

For the sake of clarity, when defining a function \(h\) to iterate, we will write \(\contWith\) for the injection $X \to X + \bot + E$, so that $\contWith(x)$ indicates iteration continues with the value $x$; and $\exitWith$ for the injection \( \bot  + E  \to  X +  \bot  + E\), so that $\exitWith(e)$ indicates iteration stops with the value $e$.

For jump resolution, iteration will continue when the result of \( \lambda \) is a jump, but exit otherwise.

\begin{definition}[Jump resolution]
 Let $S$ be a finite set, and let $ \lambda : L  \to  G(S)$ be a jump function.
 We define the resolved jump map ${ \lambda \! \downarrow }: L  \to  G(S)$, as follows:
 \[
  { \lambda \! \downarrow } = \iter \left(
    ( \ell , i,  \alpha )  \mapsto  \begin{cases}
      \contWith ( \ell ', i',  \alpha ) &  \lambda ( \ell , i,  \alpha ) = \jmp{( \ell ', i')} \\
      \exitWith ( \lambda ( \ell , i,  \alpha )) & \text{otherwise}
    \end{cases}
  \right)
 \]
\end{definition}

Because of the way $\iter$ works, this means that a cycle of $\jmp\!$-continuations (without actions in between) resolves to $\bot$.
Indeed, GKAT treats unproductive infinite iterations in the \command{while} loops with the same strategy~\cite{smolka_GuardedKleeneAlgebra_2020}, identifying non-terminating behavior with behavior that rejects explicitly.

Note that \( \lambda \! \downarrow \) resolves all internal jumps, i.e., \( \lambda \! \downarrow (\ell, i, \alpha)\) is never a jump continuation, because the iteration map used to define \( \lambda \! \downarrow \) keeps iterating on values of this form.
We can then use the resolved jump map to tie together the loose ends indicated by jump continuations.
In the process, we can reroute continuations of the form $\brk i$ to rejection, as these should not occur at the top level of a well-formed program (internal $\brk\!$-continuations are handled in the next section).

\begin{definition}[Lowering]
 Given a CF-GKAT automaton \(A  \triangleq   \langle S,  \delta , \hat{s},  \lambda  \rangle \) and $i  \in  I$, we define the GKAT automaton \({\mathit{A}\! \downarrow _{i}}  \triangleq   \langle S  \times  I,  \delta \! \downarrow , (\hat{s}, i) \rangle \), where $ \delta \! \downarrow $ is defined in two steps:
 \begin{align*}
   \delta '((s, i),  \alpha ) &  \triangleq
    \begin{cases}
       \lambda \! \downarrow ( \ell , i,  \alpha ) &  \delta (s, i,  \alpha ) = \jmp{( \ell , i)} \\
       \delta (s, i,  \alpha ) & \text{otherwise}
    \end{cases}\\
   \delta \! \downarrow ((s, i),  \alpha ) &  \triangleq
  \begin{cases}
    \mathrlap{ \bot }\hphantom{ \lambda \! \downarrow ( \ell , i,  \alpha )} &  \delta '(s, i,  \alpha ) = \brk{j} \\
    \top  & \text{$ \delta '(s, i,  \alpha ) = \ret$ or $\delta'(s, i, \alpha) = \acc{j}$} \\
    \delta (s, i,  \alpha ) & \text{otherwise}
  \end{cases}
 \end{align*}
\end{definition}
Note that $ \delta \! \downarrow $ is well-defined: for all $s  \in  S$, $i  \in  I$ and $ \alpha   \in  \At$, it holds that $ \delta \! \downarrow ((s, i),  \alpha )  \in   \bot  +  \top  +  \Sigma   \times  (S  \times  I)$, as expected for a GKAT automaton on state set $S  \times  I$.

\begin{example}
If $A$ is the automaton from \Cref{example:cf-gkat-automaton}, then $\mathit{A}\! \downarrow _1$ is the following GKAT automaton:
\[
    \begin{tikzpicture}
        \node (shat1) {$(\hat{s}, 1)$};
        \node[right=of shat1] (s2) {$(s, 2)$};
        \node[left=of shat1] (s1) {$(s, 1)$};
        \draw ($(shat1) + (0mm, 7mm)$) edge[-latex] (shat1);
        \draw (shat1) edge[-latex] node[above] {\footnotesize $\{t\} \mid p$} (s2);
        \draw (shat1) edge[-latex] node[above] {\footnotesize $\emptyset \mid p$} (s1);
        \node[right=5mm of s2] (s2acc) {\footnotesize $\emptyset, \{t\}$};
        \draw (s2) edge[double,double distance=1.5pt,-implies] (s2acc);
        \node[left=5mm of s1] (s1acc) {\footnotesize $\emptyset, \{t\}$};
        \draw (s1) edge[double,double distance=1.5pt,-implies] (s1acc);
    \end{tikzpicture}
\]
Here, we use similar graphical conventions as before; for instance, $(\hat{s}, 1) \xrightarrow{\{ t \} \mid p} (s, 2)$ means that $\delta\!\downarrow((\hat{s}, 1), \{ t \}) = (p, (s, 2))$, and $(s, 2) \implies \emptyset,\{t\}$ means that $\delta\!\downarrow((s, 2), \emptyset) = \delta\!\downarrow((s, 2), \{t\}) = \top$.

In this lowered automaton, the jump continuation originally reached from $\hat{s}$ for indicator value $1$ and $t$ false (i.e., atom $\emptyset$) has been resolved via the jump map $\lambda$ to continue in state $s$.
This automaton also holds two copies of the state $s$, one for each possible indicator value; in this case, these states happen to have the same behavior, but in general they may be different.
\end{example}

Having defined our lowering operation, we can state its correctness as follows.

\begin{theorem}[Correctness of lowering]\label{the:cf-gkat-automaton-lowering-correctness}
 Let \(A  \triangleq   \langle S,  \delta , \hat{s},  \lambda  \rangle \) be a CF-GKAT automaton.
 The translation from CF-GKAT automata to GKAT automata commutes with the semantic jump resolution operator, in the sense that for $i  \in  I$, it holds that $ \lBrack A\! \downarrow _i \rBrack  =  \lBrack A \rBrack \! \downarrow ^ \sharp _{i}$.
\end{theorem}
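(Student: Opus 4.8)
The plan is to prove both inclusions, each by exploiting that the object on one side is a \emph{least} fixed point. Since the semantic jump-resolution operator $\downarrow$ is defined by mutual induction over all $k \in L + \sharp$ (the third rule for $G\!\downarrow^k_i$ calls $G\!\downarrow^\ell_j$), I first strengthen the goal so that it treats the start marker $\sharp$ and every label $\ell$ uniformly. Write $\lBrack - \rBrack_{A\!\downarrow}$ for the GKAT state semantics of the lowered transition structure on $S \times I$, so that $\lBrack A\!\downarrow_i \rBrack = \lBrack (\hat s, i) \rBrack_{A\!\downarrow}$. For each $\ell \in L$ and $i \in I$ let $N^\ell_i$ be the guarded language obtained by running $A\!\downarrow$ from the \emph{virtual label-entry} whose first transition on atom $\alpha$ is read off the resolved jump map $\lambda\!\downarrow(\ell, i, \alpha)$, and set $N^\sharp_i \triangleq \lBrack (\hat s, i)\rBrack_{A\!\downarrow}$. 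Both $N^\sharp$ and $N^\ell$ are then least fixed points of the same run rules: accept unconditionally when the entry yields $\acc{j}$ or $\ret$; prepend $\alpha p$ and descend into the successor state when it yields an action; reject for $\brk{j}$ or $\bot$. The strengthened goal is
\[
 \lBrack A \rBrack\!\downarrow^k_i = N^k_i \qquad (k \in L + \sharp,\ i \in I),
\]
from which the theorem is the case $k = \sharp$.

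The crux is a lemma relating the two jump-resolution mechanisms, which operate at different levels: the operational $\lambda\!\downarrow = \iter(\cdots)$ and the declarative $\downarrow$. The key observation is that along a chain of $\jmp$-continuations the atom never changes --- a jump trace is a single atom $\alpha \cdot \jmp{(\ell, i)}$, and the third $\downarrow$-rule forces the atom of the jumping trace to match the leading atom of its continuation. Fixing $\alpha$, I would show that $\lambda\!\downarrow(\ell, i, \alpha)$ follows exactly the chain $(\ell, i) \mapsto (\ell_1, i_1) \mapsto \cdots$ induced by $\lambda(\ell_m, i_m, \alpha) = \jmp{(\ell_{m+1}, i_{m+1})}$: it exits with the first non-jump value (an action, $\acc{j}$, $\ret$, or $\brk{j}$), and returns $\bot$ precisely when the chain revisits a pair, i.e. cycles. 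This matches the semantic side verbatim: a productive chain lets the third $\downarrow$-rule peel off jumps one at a time until it reaches an accepting or returning trace, whereas a cyclic chain can never bottom out in $\acc{j}$ or $\ret$ and so contributes no flattened word --- exactly the behavior that $\iter$ encodes by returning $\bot$ on a revisited node.

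With this lemma in hand, the two inclusions are symmetric leastness arguments. For $N^k_i \subseteq \lBrack A \rBrack\!\downarrow^k_i$ I would check that $\lBrack A \rBrack\!\downarrow$ is closed under the run rules defining $N$: an accepting step --- the entry yielding $\acc{j}$ or $\ret$, i.e.\ $\top$ after conversion --- is, by the lemma, backed by an $\acc{j}$- or $\ret$-trace of $\lBrack A \rBrack$, which $\downarrow$ flattens; an action step prepends $\alpha p$, matching either the action rule of the continuation semantics or a resolved-jump transition composed with the concatenation rule of $\downarrow$. For the reverse inclusion $\lBrack A \rBrack\!\downarrow^k_i \subseteq N^k_i$ I would use that $\lBrack A \rBrack\!\downarrow$ is the least family closed under the three $\downarrow$-rules and verify that $N$ is closed under them. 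This step carries an \emph{inner} induction: the continuation-semantics traces $w \cdot c \in \lBrack \delta(\hat s)\rBrack_A$ (respectively $\lBrack \lambda(\ell)\rBrack_A$) are themselves a least fixed point whose action rule grows words by descending into successor states, so I would induct on the derivation of $w \cdot c$ to show that resolving its final continuation against $N$ yields a genuine run of $A\!\downarrow$. The flatten rules for $\acc{j}$ and $\ret$ land on the accepting transitions of $\delta\!\downarrow$; the jump rule lands on a resolved-jump transition via the lemma; and $\brk{j}$-terminated traces are discarded on both sides --- semantically because $\downarrow$ has no rule for $\brk{j}$, operationally because $\delta\!\downarrow$ sends $\brk{j}$ to $\bot$.

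The main obstacle is the jump-resolution lemma together with its deployment in the reverse inclusion, where one must reconcile the operational, cycle-detecting $\iter$ (which returns $\bot$ on a revisited label) with the declarative least fixed point $\downarrow$ (which silently omits unproductive jump cycles), while simultaneously interleaving two recursions: the outer mutual recursion over $L + \sharp$ that threads jumps between labels and the start, and the inner recursion that grows guarded words through action transitions. Keeping these interleaved correctly, and in particular tracking that the atom stays fixed along each jump chain so the $\iter$ and $\downarrow$ views line up, is where the real work lies; the remaining bookkeeping (embedding indicators into the state set, and the $\acc{j}/\ret \mapsto \top$ and $\brk{j} \mapsto \bot$ conversions) is routine.
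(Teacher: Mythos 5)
Your proposal is correct and is, at its core, the same strategy the paper uses, but the two differ in how they set up the induction. The paper generalizes over \emph{states}: for every state $s$ it proves $\lBrack (s,i) \rBrack_{A\downarrow_i} = G\!\downarrow^\sharp_i$ where $G^\sharp = \lBrack \delta(s) \rBrack_A$ and $G^\ell = \lBrack \lambda(\ell) \rBrack_A$, and then argues by induction on the length of guarded words; you instead generalize over $L + \sharp$ via virtual label entries $N^\ell$, argue both inclusions by leastness, and push the quantification over states into an inner induction on derivations. These are interchangeable, though the paper's per-state formulation is the cleaner way to phrase what your inner induction accomplishes, and you will need the per-state version in the forward inclusion as well (closure of $\lBrack A \rBrack\!\downarrow$ under the run rules of $A\!\downarrow_i$ is a statement about arbitrary states $(s',j)$, not just the entries indexed by $L+\sharp$). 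What your write-up adds over the paper's two-line sketch is the explicit jump-resolution lemma --- that $\lambda\!\downarrow$ computed by $\iter$ agrees with the declarative $\downarrow$ on fixed-atom jump chains, with cycles mapping to $\bot$ on one side and to the absence of any derivable flattened word on the other. This is a genuine subtlety worth isolating: a naive induction on word length does not obviously terminate across the third $\downarrow$-rule, since a single-atom trace $\alpha \cdot \jmp{(\ell,j)}$ resolves to a word of the \emph{same} length, so some secondary measure (e.g.\ the remaining unvisited $(\ell, i)$ pairs, exactly what $\iter'$ tracks) is needed; your lemma supplies it.
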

\begin{proof}[Proof sketch]
More generally, we can prove that for any state $s$ of $A$, it holds that $ \lBrack (s, i) \rBrack _{A\! \downarrow _i} = G\! \downarrow ^ \sharp _{i}$, in which $G$ is the labeled family given by $G^\sharp =  \lBrack \delta(s) \rBrack _A$ and $G^\ell =  \lBrack \lambda(\ell) \rBrack _A$.
This property, which implies the main claim, can be proved by induction on the length of guarded words.
\end{proof}

\subsection{Converting Expressions to CF-GKAT Automata}\label{sec:thompson-construction}

The final piece of our puzzle is to convert CF-GKAT expressions to CF-GKAT automata.
To accomplish this, we generalize a construction proposed for GKAT, which turns a GKAT expression into a GKAT automaton in a trace-equivalent manner~\cite{smolka_GuardedKleeneAlgebra_2020}.
This construction, which was inspired by Thompson's construction to obtain a non-deterministic finite automaton from a regular expression~\cite{thompson_ProgrammingTechniquesRegular_1968}, proceeds by induction on the structure of the expression.
In contrast to the original, however, the Thompson construction for GKAT produces a GKAT automaton with a \emph{start dynamics}, also called an \emph{initial pseudostate}~\cite{smolka_GuardedKleeneAlgebra_2020}, instead of an explicit start state.
We adopt this shift in presentation to efficiently compose automata, avoiding the silent transitions in the original~\cite{thompson_ProgrammingTechniquesRegular_1968}.

\begin{definition}
 A CF-GKAT automaton with start dynamics \(A  \triangleq   \langle S,  \delta ,  \iota ,  \lambda  \rangle \) consists of $S$, $ \delta $ and $ \lambda $ as in a CF-GKAT automaton, in addition to a start dynamics \( \iota   \in  G(S)\).
 The labeled family defined by a CF-GKAT automaton with start dynamics, also denoted $ \lBrack A \rBrack $, is simply given by $ \lBrack A \rBrack ^\sharp =  \lBrack \iota \rBrack _A$ and $ \lBrack  A  \rBrack ^ \ell  =  \lBrack   \lambda ( \ell )  \rBrack _A$, where the right-hand sides are defined as for plain CF-GKAT automata.
\end{definition}

It should be clear that CF-GKAT automata with start dynamics can easily be converted to plain CF-GKAT automata by adding a start state \(\hat{s}\) that takes the behavior of the start dynamics \( \iota \):
\begin{equation}\label{cons: CF-GKAT pseudo start to CF-GKAT automata}
  \langle S,  \delta ,  \iota ,  \lambda  \rangle   \mapsto   \langle S + \hat{s},  \delta _ \iota , \hat{s},  \lambda  \rangle ,
 \qquad
 \text{where}
 \qquad
  \delta _ \iota (s, i,  \alpha )  \triangleq
 \begin{cases}
   \iota (i,  \alpha )    & \text{if } s = \hat{s} \\
   \delta (s, i,  \alpha ) & \text{if } s  \neq  \hat{s}
 \end{cases}
\end{equation}

Our construction turns a CF-GKAT expression \(e\) into a CF-GKAT automaton with start dynamics, which we call the \emph{Thompson automaton for \(e\)}.
The following paragraphs describe the construction and intuition behind each case in the construction.
We make the simplifying assumption that each $\ell$ appears at most once in a subexpression of the form $\comLabel{\ell}$.
Moreover, we write \({!}\) for the unique function \({!}:  \emptyset   \to  X\) (for any set $X$), and in inductive cases we denote the Thompson automata for \(e_{1}\) and \(e_{2}\) by \(A_{1}  \triangleq   \langle S_{1},  \delta _{1},  \iota _{1},  \lambda _{1} \rangle \) and \(A_{2}  \triangleq   \langle S_{2},  \delta _{2},  \iota _{2},  \lambda _{2} \rangle \) respectively.
Finally, to compress our notation, we will use the compact syntax of GKAT~\cite{smolka_GuardedKleeneAlgebra_2020} for \command{if}-statements and \command{while}-loops:
\begin{mathpar}
  e_{1} +_b e_{2}  \triangleq  \comITE{b}{e_{1}}{e_{2}}, \and
  e_{1}^{(b)}  \triangleq  \comWhile{b}{e_{1}}.
\end{mathpar}

\subsubsection*{Converting \command{break}, \command{return}, \command{goto}, and indicator assignment:}
recall that the semantics of \(\comBrk\), \(\comRet\), \(\comGoto{ \ell }\), and indicator assignments simply emit the corresponding continuations.
Thus, the non-trivial part of these Thompson automata are the start dynamics \( \iota \), which output said continuations.
Because the start dynamics does not need to transition anywhere, there are no further states; because these programs also do not contain any $\comLabel\!$ primitives, there is no need to assign a dynamics for them either.
We thus construct the following automata:
\begin{align*}
  S_{\comBrk} &  \triangleq   \emptyset  &
     \delta _{\comBrk} &  \triangleq  {!} &
     \iota _{\comBrk}(i,  \alpha ) &  \triangleq  \brk{i} &
     \lambda _{\comBrk}( \ell , i, \alpha) &  \triangleq   \bot  \\
  S_{\comRet} &  \triangleq   \emptyset  &
     \delta _{\comRet} &  \triangleq  {!} &
     \iota _{\comRet}(i,  \alpha ) &  \triangleq  \ret &
     \lambda _{\comRet}( \ell , i, \alpha) &  \triangleq   \bot  \\
  S_{\comGoto{ \ell }} &  \triangleq   \emptyset  &
     \delta _{\comGoto{ \ell }} &  \triangleq  {!} &
     \iota _{\comGoto{ \ell }}(i,  \alpha ) &  \triangleq  \jmp{( \ell , i)} &
     \lambda _{\comGoto{ \ell }}( \ell , i, \alpha) &  \triangleq   \bot  \\
  S_{x := i} &  \triangleq   \emptyset  &
     \delta _{x := i} &  \triangleq  {!} &
     \iota _{x := i}(i,  \alpha ) &  \triangleq  \acc{i} &
     \lambda _{x := i}( \ell , i, \alpha) &  \triangleq   \bot
\end{align*}

\subsubsection*{Converting tests and primitive actions:}
the conversions of primitive tests and primitive actions largely inherit the Thompson construction for GKAT\@.
The Thompson automaton for tests \(\comAssert{b}\) contains only a start dynamics, which accepts the input indicator-atom pairs if and only if they satisfy \(b\).
The Thompson's automata for primitive actions \(p\) contains a start dynamics that always executes the action \(p\) before transitioning to the unique state, which accepts unconditionally.
\begin{align*}
  S_{\comAssert{b}} &  \triangleq   \emptyset  &
    S_{p} &  \triangleq  \{*\} \\
   \delta _{\comAssert{b}} &  \triangleq  {!} &
     \delta _{p}(s, i,  \alpha ) &  \triangleq  \acc{i} \\
   \iota _{\comAssert{b}}(i,  \alpha ) &  \triangleq  \begin{cases}
      \acc{i} & (i,  \alpha )  \in   \lBrack b \rBrack  \\
       \bot  & (i,  \alpha )  \notin   \lBrack b \rBrack
    \end{cases} &
     \iota _{p}(i,  \alpha ) &  \triangleq  (p, *, i) \\
   \lambda _{\comAssert{b}}( \ell , i, \alpha) &  \triangleq   \bot  &
     \lambda _{p}( \ell , i, \alpha) &  \triangleq   \bot
\end{align*}

\subsubsection*{Converting labels:}
recall that \(\comLabel{ \ell '}\) is a does not affect the semantics from the start of the program, i.e., should be the same as the sequential identity \(\comAssert{\true}\).
However, the behavior of \(\comLabel{ \ell '}\) and \(\comAssert{\true}\) diverges when we consider the jump map: if $\ell = \ell'$, then executing $\comLabel{\ell'}$ starting from $\ell$ lets us reach the end of the program, whereas $\comAssert{\true}$ does not contain any label.
We thus end up with the following Thompson automaton for $\comLabel{\ell'}$:
\begin{align*}
  S_{\comLabel{ \ell '}} &  \triangleq   \emptyset  &
   \delta _{\comLabel{ \ell '}} &  \triangleq  {!} &
   \iota _{\comLabel{ \ell '}}(i,  \alpha ) &  \triangleq  \acc{i} &
   \lambda _{\comLabel{ \ell '}}( \ell , i,  \alpha ) &  \triangleq  \begin{cases}
    \acc{i} &  \ell  =  \ell '  \\
     \bot  &  \ell   \neq   \ell '
  \end{cases}
\end{align*}

\subsubsection*{Converting \command{if} statements:}
the Thompson automaton for \(\comITE{b}{e_{1}}{e_{2}}\) is also similar to that of GKAT\@: if the input indicator-atom pair satisfies (resp.\ falsifies) \(b\), then \(\iota\) will enter the Thompson automaton of \(e_{1}\) (resp.\ $e_2$) by taking on the behavior of \( \iota _{1}\) (resp.\ $\iota_2$).
The jump map \( \lambda \) assigns the entry point for label \( \ell \) based on where it appears: if it appears in \(e_{1}\), then \( \ell \) will take its entry point in \(A_{1}\); similarly, if \( \comLabel{\ell} \) appears in \(e_{2}\), \( \ell \) will take its entry point in \(A_{2}\).
\begin{minipage}{0.54\textwidth}
  \begin{align*}
  S_{e_{1} +_b e_{2}} &  \triangleq  S_{1} + S_{2} \\
   \lambda _{e_{1} +_b e_{2}}( \ell ) &  \triangleq  \begin{cases}
     \lambda _{1}( \ell ) & \comLabel{ \ell } \text{ appears in } e_{1}\\
     \lambda _{2}( \ell ) & \comLabel{ \ell } \text{ appears in } e_{2}\\
     \bot  & \text{otherwise}
  \end{cases}
  \end{align*}
\end{minipage}
\begin{minipage}{0.44\textwidth}
\begin{align*}
   \delta _{e_{1} +_b e_{2}}(s) &  \triangleq  \begin{cases}
     \delta _{1}(s) & \text{if } s  \in  S_{1} \\
     \delta _{2}(s) & \text{if } s  \in  S_{2} \\
  \end{cases} \\
   \iota _{e_{1} +_b e_{2}}(i,  \alpha ) &  \triangleq  \begin{cases}
     \iota _{1}(i,  \alpha ) & (i,  \alpha )  \in   \lBrack b \rBrack  \\
     \iota _{2}(i,  \alpha ) & (i,  \alpha )  \notin   \lBrack b \rBrack
  \end{cases}
\end{align*}
\end{minipage}

\subsubsection*{Converting Sequencing:}
Sequencing of automata can be defined by \emph{uniform continuations}~\cite{smolka_GuardedKleeneAlgebra_2020}, which combine two dynamics $h_{1}, h_{2}  \in  G(S)$ into a new dynamics \(h_{1}[h_{2}]\).
The latter acts like $h_{1}$ in almost all cases, except when $h_{1}$ accepts --- then it will take on the behavior of \(h_{2}\).
In other words, \(h_{1}[h_{2}]\) connects all the accepting transition of \(h_{1}\) to \(h_{2}\).
Uniform continuation is typically used to compose two automata or add self-loops to an automaton; it can be formally defined as follows.
\begin{definition}[Uniform Continuation]
  Let $S$ be a set.
  Given two dynamics $h_{1}, h_{2}  \in  G(S)$, their \emph{uniform continuation} is the dynamics $h_{1}[h_{2}] \in  G(S)$, defined as follows:
  \[
    h_{1}[h_{2}](i,  \alpha )  \triangleq
    \begin{cases}
    h_{2}(i',  \alpha ) & \text{if } h_{1}(i,  \alpha ) = \acc{i'} \\
    h_{1}(i,  \alpha )  & \text{otherwise}
    \end{cases}
  \]
\end{definition}
To construct the Thompson automaton for \(e_{1}; e_{2}\), we will simply connect all the accepting transitions in \(A_{1}\) to \(A_{2}\) by applying uniform continuations on start dynamics \( \iota _{1}\), transitions \( \delta _{1}\), and jump map \( \lambda _{1}\), while preserving the dynamics in \(A_{2}\).
Formally, this works out as follows:
\begingroup%
\allowdisplaybreaks%
\begin{align*}
  S_{e_{1}; e_{2}} &  \triangleq  S_{1} + S_{2} &
   \iota _{e_{1}; e_{2}} &  \triangleq   \iota _{1}[ \iota _{2}] \\
   \delta _{e_{1}; e_{2}}(s) &  \triangleq  \begin{cases}
     \delta _{1}(s)[ \iota _{2}] & \text{if } s  \in  S_{1} \\
     \delta _{2}(s) & \text{if } s  \in  S_{2} \\
  \end{cases} &
   \lambda _{e_{1}; e_{2}}( \ell , i,  \alpha ) &  \triangleq  \begin{cases}
     \lambda _{1}( \ell )[\iota_2] & \comLabel{ \ell } \text{ appears in } e_{1}\\
     \lambda _{2}( \ell ) & \comLabel{ \ell } \text{ appears in } e_{2}\\
     \bot  & \text{otherwise}
  \end{cases}
\end{align*}
\endgroup%

\subsubsection*{Converting \command{while} loops:}
Like GKAT automata, CF-GKAT automata require every transition between states to execute a primitive action.
This presents a unique challenge in defining the start dynamics for while loops.
Namely, a loop may not immediately encounter a primitive action on its first iteration, but this may trigger a change in indicator value that causes a primitive action to be executed in the \emph{next} iteration, or the one after that, et cetera.
This is reminiscent of jump resolution, where we also have to chase through some indirection to arrive at the right transition.

As a concrete example of this phenomenon, consider the following CF-GKAT program:
\begin{equation}%
\label[prog]{prog:loop-head-iter-example}
  \begin{aligned}
    \comWhile{\true}{\{
      & \comITE{x = 0}{x := 1 \\
      &}{\comITE{x = 1}{\comBrk
      }{\{\; \comAssert{\true} \;\}}} \;\}}
  \end{aligned}
 \end{equation}
 If this program starts with the indicator $0$, then the first continuation (\(\brk{1}\)) will be encountered on the second iteration of the loop.
Even worse, when starting with an indicator value like \(x = 2\), the program will enter an infinite loop and never encounter a primitive action or continuation.

Fortunately, these difficulties can be resolved by the \(\iter\) function (\Cref{def: iteration lifting}): we repeat the start of the loop body until we encounter some productive behavior, or we get stuck.
\begin{definition}[Iterated Start Dynamics]
  Let $S$ be a set, let $h  \in  G(S)$, and $b  \in  \BExp$.
  We can use the \(\iter\) function to define $h^{b}: G(S)$, as follows:
  \[
   h^b  \triangleq
   \iter\left(
     (i,  \alpha )  \mapsto
     \begin{cases}
       \exitWith(\acc{i}) & \text{if } (i,  \alpha )  \notin   \lBrack b \rBrack  \\
       \contWith(i',  \alpha ) & \text{if } (i,  \alpha )  \in   \lBrack b \rBrack  \text{ and } h(i,  \alpha ) = \acc{i'} \\
       \exitWith(h(i,  \alpha )) & \text{otherwise}
     \end{cases}
   \right)
  \]
\end{definition}
In the first case, the input \((i, a)\) does not satisfy \(b\), causing the while loop to terminate.
In the second case, the loop body accepts \((i,  \alpha )\) immediately and returns the exit indicator value \(i'\), thus the iteration of loop body will continue with \((i',  \alpha )\).
In the final case, the program executes an action or encounters a non-local control, which is when the iteration can be stopped.
\begin{example}[Iterated Start Dynamics]
  Consider~\cref{prog:loop-head-iter-example} above with indicator set \(\{0, 1, 2\}\), no primitive action, no label, and no primitive test.
  Then the only atom is \( \emptyset \), and the Thompson automaton \(A_{1}  \triangleq   \langle S_{1},  \delta _{1},  \iota _{1},  \lambda _{1} \rangle \) for the loop body can be computed as follows:
  \begin{align*}
    &&  \iota _{1}(0,  \emptyset ) &  \triangleq  \acc{1} &&&\\
    S_{1} &  \triangleq   \emptyset  &
     \iota _{1}(1,  \emptyset ) &  \triangleq  \brk{1} &
     \delta _{1} &  \triangleq  {!} &
     \lambda (\ell, i, \alpha) &  \triangleq  \bot \\
    &&  \iota _{1}(2,  \emptyset ) &  \triangleq  \acc{2} &&&
  \end{align*}
  Then we compute the iterated start dynamics \( \iota ^{\true}\) with input \((0, \emptyset )\) and \((2, \emptyset )\):
  \begin{align*}
    { \iota _{1}}^{\true}(0,  \emptyset )
    & = { \iota _{1}}^{\true}(1,  \emptyset )
      & \text{because }(0,  \emptyset )  \in   \lBrack \true \rBrack  \text{ and } { \iota _{1}}(0,  \emptyset ) = \acc{1} \\
    & = \brk{1}
      & \text{because } { \iota _{1}}(1,  \emptyset ) = \brk{1} \\[5px]
    { \iota _{1}}^{\true}(2,  \emptyset )
    & = { \iota _{1}}^{\true}(2,  \emptyset )
      & \text{because }(2,  \emptyset )  \in   \lBrack \true \rBrack  \text{ and }  \iota _{1}(2,  \emptyset ) = \acc{2} \\
    & =  \bot
      & \text{because the input \((2,  \emptyset )\) is already explored}
  \end{align*}
\end{example}

With the start dynamics defined, we still need to resolve structures within the loop body, like the \(\comBrk\)-continuation.
To perform $\comBrk$-resolution, we extend the \( \lfloor - \rfloor \) operator to dynamics.
\begin{definition}
Let $S$ be a set, and let $h  \in  G(S)$.
We define $ \lfloor h \rfloor   \in  G(S)$ by lifting \(h\) via \( \lfloor - \rfloor \) (c.f. \Cref{def:intermediate-sequencing-loops}) when it returns a continuation, that is to say:
\[
   \lfloor h \rfloor (i,  \alpha ) = \begin{cases}
   \lfloor h(i,  \alpha ) \rfloor  & \text{if } h(i,  \alpha )  \in  \mathcal{C} \\
  h(i,  \alpha )   & \text{otherwise}
  \end{cases}
\]
\end{definition}

Finally, the transition function \( \delta \) and jump map \( \lambda \) can be defined by first connecting \( \delta _{1}\) and \( \lambda _{1}\) back to the start dynamics \( \iota^{(b)} \), forming a loop in the automaton, and then resolving the \(\brk{i}\) continuations using \( \lfloor - \rfloor \).
Formally, the Thompson automaton for loop \(e^{(b)}\) is defined as follows:
\begin{align*}
  S_{e_{1}^{(b)}} &  \triangleq  S_{1} &
   \delta _{e_{1}^{(b)}} &  \triangleq   \lfloor   \delta _{1}(s)[{ \iota _{1}}^{b}]  \rfloor  &
   \iota _{e_{1}^{(b)}} &  \triangleq   \lfloor  { \iota _{1}}^{b}  \rfloor  &
   \lambda _{e_{1}^{(b)}}( \ell ) &  \triangleq   \lfloor   \lambda ( \ell )[{ \iota _{1}}^{b}]  \rfloor
\end{align*}
Note how in the case of the jump map, we resolve $\brk\!$-continuations \emph{after} the uniform continuation with the iterated start dynamics.
This way, a jump to a label inside the loop that hits a $\comBrk$ statement immediately will continue executing execution after the loop, as is to be expected.


The correctness of the Thompson construction for CF-GKAT can be formulated as follows:
\begin{theorem}[Thompson's construction preserves the continuation semantics]\label{the:thompson-correctness}
  Given an expression $e  \in  \Exp$, let $A_e$ be the Thompson automaton for $e$.
  Now \( \lBrack e \rBrack  =  \lBrack A_e \rBrack \).
 \end{theorem}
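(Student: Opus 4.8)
The plan is to proceed by structural induction on $e$, establishing at each step that the \emph{entire} labeled family agrees, i.e.\ both the from-start component $\lBrack e \rBrack^\sharp = \lBrack \iota \rBrack_{A_e}$ and, for every $\ell \in L$, the from-label component $\lBrack e \rBrack^\ell = \lBrack \lambda(\ell) \rBrack_{A_e}$. Carrying both components through the induction simultaneously is essential, because the sequencing and loop constructions feed the start dynamics $\iota_2$ (resp.\ $\iota_1^b$) into the jump maps, so the jump-map obligation for a composite expression depends on the \emph{from-start} semantics of its subexpressions.

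The base cases ($\comAssert{b}$, $p$, $x := i$, $\comBrk$, $\comRet$, $\comGoto{\ell}$, $\comLabel{\ell'}$) are discharged by unfolding the automaton semantics $\lBrack \rho \rBrack_A$ on the single start dynamics (and, for $p$, the single accepting state) and comparing against the base clauses of the continuation semantics; each is a one-line match, and for $\comLabel{\ell'}$ one additionally checks that $\lambda$ emits $\acc{i}$ exactly when queried at $\ell = \ell'$. The branching case $e_1 +_b e_2$ is equally direct: the start dynamics dispatches on whether $(i,\alpha) \in \lBrack b \rBrack$ to $\iota_1$ or $\iota_2$, which by the induction hypothesis reproduces the two rules defining $\lBrack \comITE{b}{e_1}{e_2} \rBrack^\sharp$; for the jump map, the assumption that each label occurs at most once means the disjoint-union jump map selects exactly the side containing $\ell$, matching $\lBrack e_1 \rBrack^\ell \cup \lBrack e_2 \rBrack^\ell$ (one of which is empty).

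For sequencing the key auxiliary result is a \emph{uniform-continuation lemma}: in any automaton where the accepting transitions of a dynamics $h_1$ have been rerouted into a dynamics $h_2$ living on a disjoint block of states, one has $\lBrack h_1[h_2] \rBrack = \lBrack h_1 \rBrack \diamond \lBrack h_2 \rBrack$. Its proof is a trace analysis: a trace read out of $h_1[h_2]$ either stays within the $h_1$-block and terminates in a non-accepting continuation, reproduced verbatim by the second $\diamond$-rule, or reaches an accepting transition of $h_1$ and crosses into the $h_2$-block, which is exactly the first $\diamond$-rule (the atom-matching condition there mirrors the shared atom at the hand-off point). Granting this lemma, the sequencing case follows from the induction hypotheses applied to $\iota_1[\iota_2]$ and to $\lambda_1(\ell)[\iota_2]$; label-uniqueness forces one side of $\lBrack e_1;e_2 \rBrack^\ell = (\lBrack e_1 \rBrack^\ell \diamond \lBrack e_2 \rBrack^\sharp) \cup \lBrack e_2 \rBrack^\ell$ to vanish, so the definition collapses to the surviving term and matches the corresponding branch of the jump map.

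The main obstacle is the \command{while} loop, where three pieces must be reconciled with the least-fixed-point semantics of $\comWhile{b}{e_1}$: the iterated start dynamics $\iota_1^b$, the uniform continuation that closes the loop, and the break-resolution operator $\lfloor - \rfloor$. I would first prove a \emph{break-resolution lemma}, $\lBrack \lfloor h \rfloor \rBrack = \{\, w \cdot \lfloor c \rfloor \mid w \cdot c \in \lBrack h \rBrack \,\}$, using that $\lfloor - \rfloor$ rewrites only the continuation emitted immediately by $h$ and that, in the loop automaton, $\lfloor - \rfloor$ is applied to \emph{every} dynamics, so a $\brk{i}$ produced anywhere becomes $\acc{i}$, matching the $\lfloor c \rfloor$ in the loop rule. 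Next I would characterise $\iota_1^b$: since $\iter$ collapses exactly those loop-head steps that test true and then accept immediately with a changed indicator, $\iota_1^b$ yields the first \emph{productive} behaviour (an action, a genuine continuation, or the guard failing), terminating because $I \times \At$ is finite, while a silent indicator cycle resolves to $\bot$ — mirroring that the semantic least fixed point admits no trace for such head-spins. Finally I would prove $\lBrack A_{e_1^{(b)}} \rBrack^\sharp = \lBrack \comWhile{b}{e_1} \rBrack^\sharp$ by mutual inclusion of the two least fixed points via induction on the length of the guarded word $w$ in a trace $w \cdot c$: every genuine transition consumes an atom-action pair and strictly shortens $w$, so the induction is well-founded, with each unrolling matching the second loop rule composed through $\diamond$ and the guard-false exit matching $\alpha \cdot \acc{i}$; the same induction, started from $\lambda_1(\ell)[\iota_1^b]$, handles the from-label component. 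I expect the bookkeeping around the interaction of $\iter$ with $\diamond$ here — ensuring the finitely many silent head-iterations line up with the semantic unrolling without dropping or double-counting indicator updates — to be the most delicate part of the whole argument.
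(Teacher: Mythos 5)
Your proposal follows essentially the same route as the paper: the paper's proof is a one-line sketch --- structural induction on $e$, generalizing the GKAT Thompson-construction argument while additionally tracking the jump map --- with the details deferred to the Coq formalization. Your elaboration (carrying both the $\sharp$- and $\ell$-components through the induction, a uniform-continuation lemma relating $h_1[h_2]$ to $\diamond$ for sequencing, and break-resolution plus word-length induction for loops) is a sound way to fill in those details, and it correctly flags the genuinely delicate point, namely the interaction of the silent $\iter$ head-steps with the semantic unrolling of the loop.
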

 \begin{proof}[Proof sketch]
 By induction on $e$.
 This proof is somewhat tedious, but ultimately doable.
 The inductive cases are very similar to the ones for the Thompson construction in plain GKAT~\cite{smolka_GuardedKleeneAlgebra_2020}; the only difference is that, this time, the semantics of the jump map needs to be taken into account.
 \end{proof}

\subsection{Algorithm, Completeness, and Complexity}

With the definitions of lowering and Thompson's construction established, the decision procedure mostly follows.
Nevertheless, it remains essential to define the alphabets $\Sigma$, $T$, $I$, and $L$, representing the set of primitive actions, primitive tests, indicator values, and labels, respectively.
We may safely restrict primitive actions, primitive tests, and labels to those explicitly present in the expression, as expanding the alphabet beyond these will preserve (the equational theory of) the trace semantics.

The set of indicator values, however, occupies a unique position.
If the initial indicator value is absent from the program, the program's traces may diverge from traces starting from the present indicator values.
\Cref{prog:loop-head-iter-example} is one of the witnesses of this phenomenon: if the initial indicator value is 0 or 1, the program terminates; however, when starting from an indicator value that does not appear in the program, the program will loop indefinitely.
An even simpler example is
\(\comAssert (x = 1); p\),
which executes \(p\) if the initial indicator value is 1, but rejects when started with indicator values that are not present in the program.
Fortunately, given an expression \(e\), it is not hard to show that if neither \(i\) nor \(i'\) appears in \(e\), then the behaviors for both values coincide, i.e.,
\( \forall   \ell ,  \lBrack e \rBrack _{i'}^{ \ell } =  \lBrack e \rBrack _{i}^{ \ell }\).
Thus, when gathering the indicator values, it suffices to take those that appear explicitly in the program augmented with a fresh value \(*\) that does not appear in the program.

We summarize our decision procedure as follows:
\begin{enumerate}
  \item
  Given two CF-GKAT programs \(e, f\), we first collect their alphabets $\Sigma$, $T$, $I$, and $L$.
  We gather the sets of primitive actions \(\Sigma\), primitive tests \(T\), and labels \(L\) that are present in either \(e\) or \(f\).
  Additionally, we identify the set of indicator values \(I\), encompassing those found in either \(e\) or \(f\), along with an additional indicator \(*\) that is exclusive to both programs.
  \item
  We then proceed to compute the Thompson automata of \(e\) and \(f\) and convert them into CF-GKAT automata, denoted as \(A_e\) and \(A_f\).
  It is noteworthy that these automata preserve the continuation semantics (\Cref{the:thompson-correctness}).
  Formally,
  \begin{mathpar}
     \lBrack A_e \rBrack  =  \lBrack e \rBrack, \and  \lBrack A_f \rBrack  =  \lBrack f \rBrack .
  \end{mathpar}
  \item
  Subsequently, we lower both \(A_e\) and \(A_f\) to GKAT automata \(A_e\! \downarrow _{i}\) and \(A_f\! \downarrow _{i}\) for each \(i  \in  I\).
  By \Cref{the:cf-gkat-automaton-lowering-correctness}, the latter exhibits the same traces as \(e\) and \(f\) starting from \(i\):
  \begin{mathpar}
     \lBrack A_e \! \downarrow _{i} \rBrack  =  {\lBrack A_e \rBrack \! \downarrow^\sharp _{i}} =  {\lBrack e \rBrack \! \downarrow^\sharp _{i}}, \and
     \lBrack A_f \! \downarrow _{i} \rBrack  =  {\lBrack A_f \rBrack \! \downarrow^\sharp _{i}} =  {\lBrack f \rBrack \! \downarrow^\sharp _{i}}
  \end{mathpar}
  \item
  Finally, we run a GKAT automata equivalence check~\cite{smolka_GuardedKleeneAlgebra_2020} on \(A_e \! \downarrow _{i}\) and \(A_f \! \downarrow _{i}\) for each \(i  \in  I\), and return true when all GKAT automata equivalence checks return true.
\end{enumerate}

\paragraph*{Soundness and completeness:}
The soundness and completeness of this algorithm now follow as a corollary of the corresponding properties for the decision procedure in GKAT\@.
We denote the algorithm introduced above as \(\mathrm{equiv}_{\mathsf{CFGKAT}}\), while the decision algorithm for equivalence of GKAT automata is denoted as \(\mathrm{equiv}_{\mathsf{GKAT}}\).
Thus, we establish the equivalence:
\begin{align*}
  \mathrm{equiv}_{\mathsf{CFGKAT}}(e, f)
  &  \iff   \forall  i  \in  I,\ \mathrm{equiv}_{\mathsf{GKAT}}(A_e \! \downarrow _{i}, A_f \! \downarrow _{i}) \tag{def. $\mathrm{equiv}_{\mathsf{CFGKAT}}$} \\
  &  \iff   \forall  i  \in  I,\  \lBrack A_e \! \downarrow _{i} \rBrack  =  \lBrack A_f \! \downarrow _{i} \rBrack  \tag{corr. $\mathrm{equiv}_{\mathsf{GKAT}}$} \\
  &  \iff   {\lBrack A_e \rBrack \! \downarrow^\sharp}  =  {\lBrack A_f \rBrack  \! \downarrow^\sharp} \tag{\Cref{the:cf-gkat-automaton-lowering-correctness}} \\
  &  \iff   {\lBrack e \rBrack \! \downarrow^\sharp}  = {\lBrack f \rBrack \! \downarrow^\sharp} \tag{\Cref{the:thompson-correctness}}
\end{align*}
Therefore \(e\) and \(f\) are trace equivalent if and only if \(\mathrm{equiv}_{\mathsf{GKAT}}(e, f)\) returns true.

\paragraph*{Algorithm complexity:}
We now give a rough account of the computational cost for deciding equivalence of CF-GKAT programs.
Like GKAT, we consider $T$ to be fixed for the purpose of analyzing complexity; otherwise, deciding equivalence of (CF-)GKAT programs is co-NP hard~\cite{smolka_GuardedKleeneAlgebra_2020}. 

Starting with an expression \(e\), we observe that the number of states in the Thompson automaton \(A_e\) is bounded by \(|e|\), which is the size of $e$ as a term.
While computing this automaton we must bear in mind that deriving the iterated start dynamics of a loop using $\iter$ may take up to $|I|$ steps (assuming we use memoization to compute the output for each input), and this happens at most $|e|$ times.
After lowering, each GKAT automaton \(A_e\! \downarrow \) contains at most \(|I| \times |e|\) states, and computing the jump resolution using $\iter$ takes on the order of $|L| \times |I|$ steps (again using memoization).

For each \(i  \in  I\), determining the equivalence between \(A_e\! \downarrow _{i}\) and \(A_f\! \downarrow _{i}\) is accomplished in (nearly\footnote{We omit the factor coming from the inverse Ackermann function $\hat{\alpha}(n)$, which is at most $5$ for any realistic number of states, out of consideration for the sake of simplicity.}) linear time relative to the number of states, so this check takes about \(|I| \times (|e|+|f|)\) steps.
To verify trace equivalence between \(e\) and \(f\), this check is required for \(A_e\! \downarrow _{i}\) and \(A_f\! \downarrow _{i}\) across all \(i \in I\).

Therefore, the overall time spent on the equivalence checks is on the order of \(|I|^2  \times  (|e|+|f|)\), while computing the automata can take roughly $|L| \times |I| + |I| \times (|e| + |f|)$ time.
This implies that the algorithm's complexity scales (nearly) linearly with the sizes of \(e\) and \(f\), linearly in the size of $L$, and quadratically with respect to the number of indicator values \(|I|\).

\section{Control Flow Validation}%
\label{sec:experiments}


We hypothesize that CF-GKAT can be a useful tool to check whether two programs have the same control flow --- that is, under the same circumstances, they execute the same sequence of actions.
In this section, we report on the performance of CF-GKAT in two particular use cases.

The first use case concerns the \emph{control flow structuring} stage of a decompiler.
Briefly put, a decompiler is a program that infers a high-level language representation of a binary executable file.
In earlier stages, the decompiler builds a \emph{control-flow graph} from the binary~\cite{DBLP:conf/pldi/VerbeekBFR22}, which represents how control transfers from one part to the next.
The control flow structuring pass is tasked with inferring an equivalent representation of this control flow in terms of constructs like \command{if-then-else} and \command{while-do}.
The second use case is about refactoring operations aimed at making code more readable by eliminating $\comGoto$ instructions, which may introduce indicator variables~\cite{erosa-hendren-1994,DBLP:journals/tsi/CasseFRS02}.

CF-GKAT can be used to validate that the control flow of the input (graph or code) corresponds to that of the output code.
In general, such algorithms can never be \emph{complete} with respect to input-output equivalences, as they cannot automatically validate the correctness of refactorings depend on the meaning of primitive commands.
However, CF-GKAT should be applicable to refactoring operations that rearrange the code for the sake of improving the presentation of the control flow.

\subsection{Implementation}

We target C, as it is a widely used programming language that allows for non-local control flow.
Our implementation is written in OCaml and is divided into a front-end and a back-end~\cite{artifact}.

The \emph{front-end} converts a function defined in a C file to a CF-GKAT expression,
using the \command{clangML}\footnote{\url{https://memcad.gitlabpages.inria.fr/clangml/}} OCaml bindings for \command{clang}.\footnote{\url{https://clang.llvm.org/}}
The conversion first determines which variables, if any, qualify as indicator variables and picks (at most) one.
Next, it lifts the C syntax tree to a CF-GKAT program, mapping (1)~all assignments and comparisons of the indicator variable to their CF-GKAT counterparts, (2)~all other primitive statements to uninterpreted actions, and (3)~all control flow constructs to the corresponding CF-GKAT operator whenever possible.

The \emph{back-end} is responsible for compiling a CF-GKAT expression to a GKAT automaton, and for comparing two GKAT automata.
This GKAT automaton is derived using the construction from \Cref{section:decision procedure}; equivalence is decided by checking bisimulation~\cite{smolka_GuardedKleeneAlgebra_2020}.

These parts combine into a tool that accepts two C files, and checks whether their functions (paired by name) have the same control flow.
Not all C programs can be converted to CF-KAT expressions, and so our front-end is not complete; instead, we aimed to support a sizeable portion of the code in GNU Coreutils to perform the experiments described in the next section.

\begin{remark}
In particular, our front-end internally converts \command{for}-loops to \command{while}-loops, and \command{do-while} loops to (partially unrolled) \command{while} loops. The former transformation is sound; the latter is admissible when the loop body does not contain \command{break} or a label, which is the case for our experiment below.
\end{remark}

\subsection{A Case Study from GNU Coreutils}


We used GNU Coreutils as a source of C code containing non-local control flow.
As is customary with projects of this size, the code is organized into several files that collectively define hundreds of functions, ranging from very simple to very complex.
Throughout this section, we focus on the function \command{mp\_factor\_using\_pollard\_rho} in \command{factor.c}.
The code in Figure~\ref{c:orig} shows an abridged version of this function in Coreutils 9.5.
We now discuss the two transformations we targeted.

\begin{figure}[hbtp]
\centering
 \begin{subfigure}{0.46\textwidth}
 \begin{lstlisting}[basicstyle=\tiny\ttfamily]
  static void mp_factor_using_pollard_rho(...) {
  mpz_t x, z, y, P;
  mpz_t t, t2;
  devmsg("...", a);
  ...
  while (mpz_cmp_ui(n, 1) != 0) {
    for (;;) {
      do {
        mpz_mul(t, x, x);
        ...
        if (k % 32 == 1) {
          mpz_gcd(t, P, n);
          if (mpz_cmp_ui(t, 1) != 0)
            goto factor_found;
          mpz_set(y, x);
        }
      } while (--k != 0);
      mpz_set(z, x);
      ...
      for (unsigned long long int i = 0; i < k; i++) {
        mpz_mul(t, x, x);
        ...
      }
      mpz_set(y, x);
    }
  factor_found:
    do {
      mpz_mul(t, y, y);
      ...
    } while (mpz_cmp_ui(t, 1) == 0);
    mpz_divexact(n, n, t);
    if (!mp_prime_p(t)) {
      devmsg("...");
      ...
    } else {
      mp_factor_insert(factors, t);
    }
    if (mp_prime_p(n)) {
      mp_factor_insert(factors, n);
      break;
    }
    mpz_mod(x, x, n);
    ...
  }
  mpz_clears(P, t2, t, z, x, y, nullptr);
}
 \end{lstlisting}
 \caption{\label{c:orig}Original code of the function.}
 \end{subfigure}
 \begin{subfigure}{0.46\textwidth}
 \begin{lstlisting}[basicstyle=\tiny\ttfamily]
void mp_factor_using_pollard_rho() {
  pact(197);
  ...
  do {
    if (pbool(83)) {
      pact(194);
    }

  } while (0);
  pact(193);
  ...
  while (pbool(83)) {
    for (;;) {
      do {
        pact(176);
        ...
        if (pbool(183)) {
          pact(182);
          if (pbool(83)) {
            goto factor_found;
          }
          pact(173);
        }

      } while (pbool(181));
      pact(180);
      ...
      for (pact(177); pbool(138); pbool(59)) {
        pact(176);
        ...
      }
      pact(173);
    }
  factor_found:
    do {
      pact(172);
      ...
    } while (pbool(83));
    pact(166);
    if (!pbool(165)) {
      do {
        if (pbool(83)) {
          pact(164);
        }
      } while (0);
      pact(163);
    } else {
      pact(161);
    }
    if (pbool(160)) {
      pact(159);
      break;
    }
    pact(158);
    ...
  }
  pact(155);
}
 \end{lstlisting}
 \caption{\label{c:blinded}``Blinded'' code of the function.}
 \end{subfigure}
 \caption{Different versions of \command{mp\_factor\_using\_pollard\_rho} in \command{factor.c}, part of GNU Coreutils.}
 \vspace{3mm} 
\end{figure}
\begin{figure}
 \begin{subfigure}{0.46\textwidth}
 \begin{lstlisting}[basicstyle=\tiny\ttfamily]
void mp_factor_using_pollard_rho(void) {
  pact(0xc5);
  ...
  if (pbool(0x53)) {
    pact(0xc2);
  }
  pact(0xc1);
  ...
LAB_00100a0c:
  if (pbool(0x53)) {
  LAB_00100a2d:
    pact(0xb0);
    ...
    if (pbool(0xb7)) {
      pact(0xb6);
      if (pbool(0x53))
        goto LAB_00100b47;
      pact(0xad);
    }
    if (!pbool(0xb5)) {
      pact(0xb4);
      ...
      while (pbool(0x8a)) {
        pact(0xb0);
        ...
      }
      pact(0xad);
    }
    goto LAB_00100a2d;
  }
LAB_00100c34:
  pact(0x9b);
  return;
LAB_00100b47:
  do {
    pact(0xac);
    ...
  } while (pbool(0x53));
  pact(0xa6);
  if (!pbool(0xa5)) {
    if (pbool(0x53)) {
      pact(0xa4);
    }
    pact(0xa3);
  } else {
    pact(0xa1);
  }
  if (pbool(0xa0)) {
    pact(0x9f);
    goto LAB_00100c34;
  }
  pact(0x9e);
  ...
  goto LAB_00100a0c;
}
 \end{lstlisting}
 \caption{\label{c:ghidra}Ghidra's decompilation of the function.}
 \end{subfigure}
 \begin{subfigure}{0.46\textwidth}
 \begin{lstlisting}[basicstyle=\tiny\ttfamily]
void mp_factor_using_pollard_rho() {
  int factor_found = 0;
  pact(197);
  ...
  do {
    if (pbool(83)) {
      pact(194);
    }
  } while (0);
  pact(193);
  ...
  while (pbool(83)) {
    for (; factor_found == 0;) {
      do {
        pact(176);
        ...
        if (pbool(183)) {
          pact(182);
          if (pbool(83)) {
            factor_found = 1;
          }
          if (factor_found == 0)
            pact(173);
        }
      } while ((factor_found == 0) && pbool(181));
      if (factor_found == 0) {
        pact(180);
        ...
        for (pact(177); pbool(138); pbool(59)) {
          pact(176);
          ...
        }
        pact(173);
      }
    }
    factor_found = 0;
    do {
      pact(172);
      ...
    } while (pbool(83));
    pact(166);
    if (!pbool(165)) {
      do {
        if (pbool(83)) {
          pact(164);
        }
      } while (0);
      pact(163);
    } else {
      pact(161);
    }
    if (pbool(160)) {
      pact(159);
      break;
    }
    pact(158);
    ...
  }
  pact(155);
}
 \end{lstlisting}
 \caption{\label{c:calipso}The original, with \command{goto}s removed by Calipso.}
 \end{subfigure}
 \caption{Different versions of \command{mp\_factor\_using\_pollard\_rho} in \command{factor.c}, part of GNU Coreutils.}
\end{figure}

\paragraph{Compilation-decompilation}
We hypothesize that CF-GKAT should be usable to validate the output of control flow structuring algorithms in decompilers.
To fully test this hypothesis, we would need to have access to the internal representations used before and after control flow structuring, and convert those to GKAT automata and CF-GKAT expressions, respectively.
Unfortunately, doing this would entail a substantial engineering effort.
As a more feasible but slightly less rigorous benchmark, we \emph{compiled} C code to x86 binary code, and then \emph{decompiled} the result using Ghidra.\footref{footnote:Ghidra}
This transformation can be implemented without modifying existing codebases, and should still give insight into decompiler correctness by letting us compare the decompiled source to the original.

However, we immediately face the challenge of pairing the primitive actions from the decompiled code with actions in the source code.
For instance, a primitive action \lstinline{i += 1} in the source code could be decompiled to \lstinline{i++}.
Detecting \emph{all} such transformations would be outside the scope of our project.
To address this, we make the compiler \emph{blind} to the nature of the primitive actions and primitive tests by replacing them with calls to new functions \lstinline{void pact(int)} and \lstinline{bool pbool(int)}, respectively. The parameter distinguishes the primitives, and the correspondence in the decompiled code can be inferred from it. 
Figure~\ref{c:blinded} shows the (abbreviated) \emph{blinded} version of our case study function.
This transformation does not alter control flow, but the blinded code can be longer, as the blinder also expands preprocessor macros.
Crucially, blinding depends on indicator variable detection, since indicator tests and assignments need to remain in the blinded code.

In this experiment, we use Ghidra as our decompiler, and \command{clang} as our compiler.
The C code obtained from compiling and then decompiling the blinded code (Figure~\ref{c:blinded})
and manually removing decompilation artifacts (see remark below) is shown in Figure~\ref{c:ghidra}.
The code produced is markedly different from the source --- it has three more \command{goto}s with as many labels --- yet our implementation is able to validate, in a fraction of a second,\footnote{\label{f:experiments}The experiments ran on a system with an Intel Core i7-9750H CPU @ 2.60GHz and 16 GB of RAM.} that this code is equivalent to the original.
\begin{remark}%
The following manual modifications were needed.
\begin{inparaenum}[(a)]
\item Ghidra assigns the return value of \lstinline{pbool} to a variable before reading it. This would be interpreted as a primitive action by our conversion. We manually propagate the assignments to avoid this.
\item In conditional expressions, Ghidra masks values with \lstinline{1} and then compares the result to \lstinline{0}, e.g.:\ \lstinline{(pbool(n)&1)!=0}. Again, our conversion would interpret this as a primitive action and not a test. We simplify this to \lstinline{pbool(n)}. 
\end{inparaenum}
We believe that these steps could be automated, or Ghidra could be adapted to avoid them, but opted to perform them manually to keep our case study simple.
\end{remark}

\paragraph{Goto-elimination}
In general, \command{goto} statements can be eliminated by introducing additional indicator variables to guide control flow~\cite{yakdan_NoMoreGotos_2015,DBLP:journals/cacm/BohmJ66,erosa-hendren-1994}.
This idea underpins Erosa and Hendren's \command{goto}-elimination algorithm~\cite{erosa-hendren-1994}.
Calipso~\cite{DBLP:journals/tsi/CasseFRS02} provides an improved implementation\footnote{\url{https://github.com/BinaryAnalysisPlatform/FrontC}} of their algorithm.

We ran Calipso on the blinded code in Figure~\ref{c:blinded}, and manually adjusted the output (see remark below).
The result again fundamentally different from the blinded source, as it contains a new indicator variable that is not present in the input, and does not contain any \command{goto}s.
Our tool confirms, again in a fraction of a second, that the code thus obtained is equivalent to the blinded input.
Indicator detection is crucial: if the newly-introduced variable \lstinline{factor_found} is not detected as an indicator, its assignments and tests will be converted into primitive actions and tests respectively.

\begin{remark}
The following manual modifications were needed for this experiment.
\begin{inparaenum}[(a)]
\item
We changed the type of the newly introduced variable \lstinline{factor_found} from \lstinline{char} to \lstinline{int}.
\item
We changed instances where \lstinline{factor_found} is used as a Boolean into comparisons (e.g., \lstinline{if (factor_found == 0)}).
\end{inparaenum}
As before, we consider these adaptations to be relatively minor; a future \command{goto}-elimination algorithm could prevent them from being necessary, or they could be performed automatically.
\end{remark}

\paragraph{Blinder coverage}
Overall, we consider our blinder to be quite promising: despite the fact that our current prototype does not support constructs like \command{switch}, we are able to blind 807 of the 1425 functions in Coreutils.
We use the cyclomatic complexity number (CCN)~\cite{1702388} as reported by Lizard\footnote{\url{http://www.lizard.ws/}} to provide an overview of how intricate these functions are.
The majority of the blinded functions have CCN up to 4, but we are able to blind some rather complex functions with the maximum CCN being 44.
Figure~\ref{t:barplot} provides a plot of the frequency of blinded functions per CCN\@.

CCN alone is not the discriminating factor that prevents us from blinding functions: a run of Lizard on the original source code of Coreutils shows that, while we are unable to blind functions with very high CCN (the maximum being 281), the median and mode CCN of functions we are not able to blind are 9 and 2, respectively, which shows there are relatively simple functions with respect to the CCN metric that we cannot handle.
In order to improve our coverage the focus should be on supporting more C constructs; for example, \command{switch} statements, \command{continue} statement, and ternary operators.
Such an extension requires some additions to our theory, as would support for other constructs.
Nevertheless, it is quite possible to embed special cases in the current theory.




\begin{figure}
 \includegraphics[width=5.5cm]{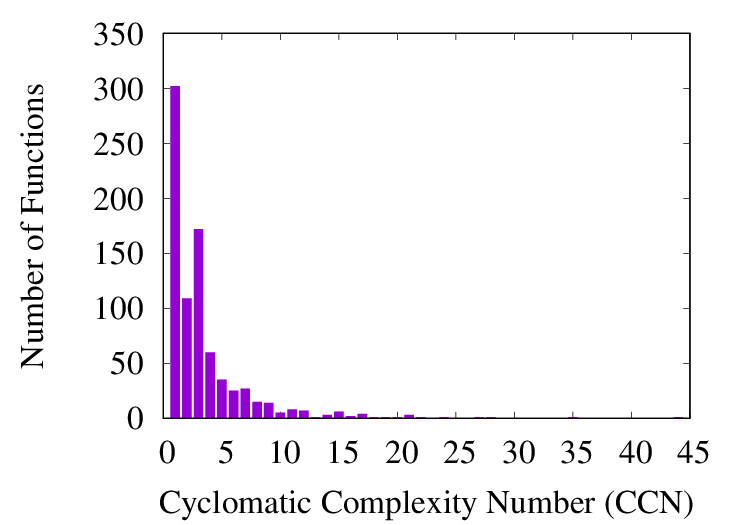}
 \Description{The plot of cyclomatic complexity number of blinded function, we can see that the number of function decreases as the CNN increase. Most of the blinded function have CCN smaller or equal to 5, but there are several functions with high CCN that we can blind.}
 \caption{Plot of the number of blinded functions per cyclomatic complexity number (CCN). The maximum CCN found was 44, yet the majority of the blinded functions have low CCN.}%
 \label{t:barplot}
\end{figure}


\section{Related Work}%
\label{sec:related-work}


\subsection{(De)compiler Verification and Validation} 

Because we anticipate that the primary use case of CF-GKAT is the verification and validation of (de)compilers, we briefly go over existing work in this area, and relate it to ours. 

\smallskip
\emph{Compiler verification and validation} is a well-developed research area.
The most rigorous approach is to use a proof assistant to verify the compilation algorithm relative to a formal semantics of the source and target language~\cite{leroy_CompCertFormallyVerified_2016,tan-etal-2016,neis-etal-2015}.
This method offers a foundational guarantee of correctness without imposing runtime overhead, but can be challenging and time-consuming to adapt on a large scale.
We envision that CF-GKAT could be used to automate correctness proofs of program transformations (provided that CF-GKAT equivalence is sound w.r.t.\ the language semantics).

Compiler testing feeds synthesized programs into different or differently configured compilers, and compares the resulting programs~\cite{chen_SurveyCompilerTesting_2021a,le_CompilerValidationEquivalence_2014a,yang_FindingUnderstandingBugs_2011}.
While this approach does not provide any formal guarantee of correctness, it has been effective in finding compiler bugs~\cite{yang_FindingUnderstandingBugs_2011}.
Within this framework, CF-GKAT could be used as an equivalence checking oracle; however, because our work is limited to \emph{control flow} equivalence, a blinding pass (similar to our experiments above) is likely necessary.

Translation validation~\cite{goldberg_LoopsPracticalIssues_2005,kasampalis_LanguageparametricCompilerValidation_2021,necula_TranslationValidationOptimizing_2000,kasampalis_TranslationValidationCompilation_2021,zhang_FormalVerificationOptimizing_2018,pnueli-etal-1998,sewell-etal-2013} scales well \emph{and} offers formal guarantees.
This is achieved through the use of symbolic evaluations and bisimulation to generate verification conditions for SMT solvers, allowing the validator to handle both data-flow and control-flow transformations simultaneously, albeit at the cost of decidability. Some approaches also employ graph isomorphism checks~\cite{awadhutkar-etal-2022}, which can be more computationally expensive than bisimulation.
Compared to other efforts, our focus is exclusively on control flow transformations.
This allows us to work with a minimal language, sufficient to cover well-known control-flow transformations~\cite{yakdan_NoMoreGotos_2015,erosa-hendren-1994,DBLP:journals/cacm/BohmJ66,grathwohl_KAT_2014}, and establish meta-properties such as correctness of our decision procedure and low complexity --- all of which are properties out of reach for unabstracted (Turing-complete) languages.

Predicate abstraction~\cite{ball-etal-2001} can help focus on relevant parts of a program prior to analysis, which makes it similar to blinding.
However, predicate abstraction requires input on \emph{how} to abstract the program, whereas blinding abstracts away all parts that are not relevant to flow control.

\smallskip
\emph{Decompiler verification and validation} is, by comparison, not as explored as compiler verifcation and validation.
Again, a distinction can be made between verification~\cite{DBLP:conf/pldi/VerbeekBFR22}, validation~\cite{dasgupta_ScalableValidationBinary_2020} and testing~\cite{liu-wang-2020,cao-etal-2024}.
Current efforts in verification and validation focus on \emph{lifting}, i.e., decoding machine code into assembly instructions or some other intermediate representation~\cite{ccs24verbeek,DBLP:conf/pldi/VerbeekBFR22,dasgupta_ScalableValidationBinary_2020}.
There has been some preliminary work on developing a sound decompiler that can produce C code, although control flow structuring is limited~\cite{DBLP:conf/sefm/VerbeekOR20}.
This constitutes a prime use case for CF-GKAT\@.

\subsection{Kleene Algebra and Control Flow Verification}

Existing work has explored non-local control-flow structures and indicator variables within the framework of KAT, albeit with a number of differences from our work.

Kozen characterized the semantics for programs with non-local control-flow structures as a family of KAT expressions~\cite{kozen_NonlocalFlowControl_2008a}.
This approach yields a decision procedure for program equivalences.
In contrast, CF-GKAT takes a more explicit approach by defining the continuation semantics, and the equivalence is computed by converting programs directly into automata.
This resolves an open question by \citet{kozen_NonlocalFlowControl_2008a}, on whether non-local control flow structures can be treated ``directly'', without being converted into KAT expressions.
The interested reader is referred to~\cite{tencate-kappe-2024} for an investigation of the limits of compositionally specified deterministic control flow.

\citet{grathwohl_KAT_2014} proposed \emph{KAT+B!} an extension of KAT with ``mutable tests'', which can be thought of as indicator variables with \(I = \{\true, \false\}\).
Concretely, their setter \(b!\) equates to indicator assignment \(x = \true\) and \(\overline{b}!\) to \(x = \false\); similarly, their tester \(b?\) corresponds to primitive indicator tests. 
Although this is a special case of indicator variables, KAT+B!\@ can simulate indicator variable over a finite set \(I\) with \(|I|\) mutable tests.
We opt to treat indicator assignments and tests as primitives, rather than restricting ourselves to (boolean-valued) mutable tests.

Our treatment of indicator variables also draws inspiration from NetKAT~\cite{anderson_NetKATSemanticFoundations_2014a}.
Specifically, NetKAT can be seen as a special case of KAT with indicator variables, the only primitive action is \texttt{dup}.
KATch~\cite{moeller-etal-2024}, a fast symbolic equivalence checker for NetKAT, is of particular interest to us in this space.
In the future, we hope to scale our equivalence checker using similar symbolic techniques.

To reason about variable assignment beyond indicator variables, Schematic KAT~\cite{angus_KleeneAlgebraTests_2001} provides a fine-grained algebraic theory for assignments over uninterpreted functions.
Later, Schematic KAT was also extended to reason about local variables~\cite{aboul-hosn_LocalVariableScoping_2008a}.
Neither work covers the complexity of the equivalence problem for schematic KAT and its extensions.
Kleene algebra can also be extended with nominal techniques~\cite{kozen_CompletenessIncompletenessNominal_2015,kozen_NominalKleeneCoalgebra_2015,gabbay_FreshnessNameRestrictionSets_2011}, which may help to reason about potentially infinite data domains, although the inclusion of tests to nominal Kleene algebra has not yet been investigated.

Separately, \citet{kozen_CertificationCompilerOptimizations_2000b} have applied KAT to verify compiler correctness.
Their system directly uses postulated equalities for parts of their verification task.
In contrast, our framework is based on the trace semantics, a commonly accepted semantics for \command{while}-programs.

\subsection{Complexity and Expressivity}

Finally, unlike the systems above, our system is based on GKAT, instead of KAT\@.
This enhances the scalability of our equivalence checking algorithms to accommodate larger programs: whereas equivalence of KAT expressions is PSPACE-complete~\cite{cohen-kozen-smith-1996}, equivalence of CF-GKAT expressions can be verified in polynomial time for a fixed number of tests~\cite{smolka_GuardedKleeneAlgebra_2020}.
Symbolic techniques previously applied to (Net)KAT may also provide better ways of mitigating the complexity of tests~\cite{pous-2015,moeller-etal-2024}. 

Our system is the first to integrate both indicator variables and non-local control flow in a unified framework, which enables the verification of complex control-flow transformations that leverage both indicator variables and non-local control flow~\cite{yakdan_NoMoreGotos_2015}.
Our notion of trace equivalence is also coarser than previous systems; CF-GKAT equates programs ending in different indicator values:
\[
    \begin{array}{c}
    x := \true; \comITE{x = \true}{\command{print(1)}}{\command{print(2)}} \\[1mm]
    x := \false; \comITE{x = \false}{\command{print(1)}}{\command{print(2)}}
    \end{array}
\]
The two programs above are equivalent to an outside observer, as both of them will print 1; yet the assignment of \(x\) is different at the end of the so, thus systems like KAT+B!~\cite{grathwohl_KAT_2014} will not equate them.
On the other hand, our equivalence is not a congruence.
For example, concatenating $\comAssert{(x = true)}$ to the two equivalent programs above will yield inequivalent programs.

\section{Conclusion and Future Work}%
\label{sec:conclusion}

In this paper, we introduced CF-GKAT (Control-Flow GKAT), a system that extends GKAT with non-local control flow and indicator variables to validate control-flow transformation programs.
We formalized two semantics for CF-GKAT\@.
The first is the continuation semantics, where each trace, represented as a guarded word, is augmented with a continuation. The second is the trace semantics, which is obtained by resolving the continuations in the continuation semantics.

We proposed CF-GKAT automata as the operational model for CF-GKAT programs, and showed how to obtain these automata from CF-GKAT expressions while preserving their continuation semantics.
By lowering these automata into GKAT automata, we can then build an equivalence checker for CF-GKAT programs that scales nearly linearly with respect to the size of the input CF-GKAT programs.
Thus, our work provides an efficient validation algorithm for various control-flow transformations that utilize indicator variables and non-local control flow~\cite{yakdan_NoMoreGotos_2015,erosa-hendren-1994}.

While we successfully addressed one of Kozen's questions~\cite{kozen_NonlocalFlowControl_2008a} by presenting an algorithm to directly convert CF-GKAT programs into automata, we have yet to develop a coalgebraic perspective on non-local control flow utilizing Brzozowski derivatives~\cite{brzozowski_DerivativesRegularExpressions_1964}.
Such an approach could streamline several proofs, such as trace preservation of the lowering (\Cref{the:cf-gkat-automaton-lowering-correctness}) and the correctness of the operational semantics (\Cref{the:thompson-correctness}), and lead to a memory-efficient on-the-fly algorithm for trace equivalences between CF-GKAT programs.
A coalgebraic checker could also make use of symbolic techniques~\cite{pous-2015} to prevent explicit calculations based on the atoms of a Boolean algebra.

Additional future work could be the inclusion of the \command{continue} command within loops, as well as other types of control flow found in modern programming languages such as \command{do}-\command{while} and \command{switch}.
In terms of the theory's extensibility, it would be beneficial to separate the treatment of indicator variables and non-local control.
Currently, both components are integrated into the CF-GKAT automata signature as a unified entity.
While this approach provides a compact definition of operational semantics, it also introduces complexities when incorporating other non-local controls like \command{continue}.
Specifically, we will need to pass the indicator value in the continuation for \command{continue}, despite none of the non-local control-flow structures changing the indicator variable.

\begin{acks}
C.~Zhang was supported by \grantsponsor{chengnsf}{the US National Science Foundation}{https://dx.doi.org/10.13039/100000001} under awards no.~\grantnum{chengnsf}{CNS 1845803} and~\grantnum{chengnsf}{CNS 2040249}.
T.~Kappé was partially supported by \grantsponsor{tobiasmc}{the European Union’s Horizon 2020 research and innovation programme}{https://www.eeas.europa.eu/eeas/horizon-2020_en} under grant no.~\grantnum{tobiasmc}{101027412} (VERLAN), and partially by \grantsponsor{tobiasveni}{the Dutch research council (NWO)}{https://www.nwo.nl/} under grant no.~\grantnum{tobiasveni}{VI.Veni.232.286} (ChEOpS).
N.~Naus and D.~E.~Narváez are supported by \grantsponsor{nicodaviddarpa}{the Defense Advanced Research Projects Agency (DARPA)}{https://dx.doi.org/10.13039/100006502} and \grantsponsor{nicodavidniwc}{the Naval Information Warfare Center Pacific (NIWC Pacific)}{https://dx.doi.org/10.13039/100016744} under contract no.~\grantnum{nicodaviddarpa}{N66001-21-C-4028}.
Additionally, N.~Naus is also supported by by \grantsponsor{niconsf}{the US National Science Foundation}{https://dx.doi.org/10.13039/100000001} under award no.~\grantnum{niconsf}{CNS 2234257}.

We would like to thank Todd Schmid for the insightful discussions while preparing this work and the anonymous POPL reviewers for their thoughtful comments.
\end{acks}

\bibliographystyle{ACM-Reference-Format}
\bibliography{ref}

\end{document}